\def\be{\begin{equation}}
\def\ee{\end{equation}}
\def\ba{\begin{array}{c}}
\def\ea{\end{array}}
\def\ben{$$}
\def\een{$$}
\newcommand{\bea}{\begin{eqnarray}}
\newcommand{\eea}{\end{eqnarray}}
\newcommand{\kkt}{\kt\!\kt}
\newcommand{\kt}{\rangle}
\newtheorem{thm}{Theorem}
\newtheorem{lemma}[thm]{Lemma}
\newenvironment{proof}{\noindent {\bf Proof}}{\hfill$\square$\vspace{3mm}\endtrivlist}
\begin{document}

\titlepage

 \begin{center}{\Large \bf

Emergence and localization of exceptional points
in an exactly solvable toy model

  }\end{center}


 \begin{center}

\vspace{8mm}

  {\bf Miloslav Znojil} $^{1,2,3,4}$

\end{center}

\vspace{8mm}

  $^{1}$
 The Czech Academy of Sciences,
 Nuclear Physics Institute,
 Hlavn\'{\i} 130,
250 68 \v{R}e\v{z}, Czech Republic, {e-mail: znojil@ujf.cas.cz}

 $^{2}$
 Department of Physics, Faculty of
Science, University of Hradec Kr\'{a}lov\'{e}, Rokitansk\'{e}ho 62,
50003 Hradec Kr\'{a}lov\'{e},
 Czech Republic

$^3$
School for Data Science and Computational Thinking,
Stellenbosch University, 7600 Stellenbosch,
 South Africa

$^4$
Institute of System Science, Durban University of
Technology,
{4001}  Durban,
 South Africa


\section*{Abstract}

In contrast to classical physics,
there are not too many mathematical tools facilitating
the study of
singularities in quantum systems.
One of the exceptions is the Kato's notion
of exceptional points (EPs).
Their emergence and localization are analyzed here
via a family of schematic toy models.


\subsection*{Keywords}.

quasi-Hermitian discrete quantum square well;

boundary-controlled unitary quantum dynamics;

closed formulae for bound-state Sturmians;

non-Hermitian degeneracies {\it alias\,} Kato's exceptional points;

\newpage

\section{Introduction}

Evolutionary singularities
emerging in a classical dynamical system
are a phenomenon which found
its appropriate mathematical clarification and
qualitative classification
in the framework of popular theory of catastrophes
\cite{Zeeman}.
In the majority of applications of the procedure of quantization
people revealed
that there is a deep conceptual difference between
the emergence of singularities in the classical and quantum systems.
One is often
led to conclusion that
there is no immediate quantum analogue of
the theory of catastrophes because
the classical singularity seems {\em always\,}
smeared out after quantization \cite{Messiah}.

The latter
belief
found its particularly persuasive
reconfirmation
in quantum cosmology.
In this field the significant progress
achieved via loop quantum gravity \cite{Ashtekar}
offered a strong support of a replacement of the
point-like
Big Bang  by its
regularized version called Big Bounce
(cf., e.g., the comprehensive monographs \cite{Rovelli,Thiemann}
for details).

We plan to defend our persuasion
that
the situation became radically changed
after the recent innovation
of the formalism of quantum mechanics
using non-Hermitian operators  \cite{book}.
Widely, the innovated theory
became known under the nicknames of
quasi-Hermitian quantum mechanics \cite{Geyer} {\it alias\,}
${\cal PT}-$symmetric quantum mechanics \cite{Carl}
{\it alias} pseudo-Hermitian quantum mechanics \cite{ali}
\textcolor{black}{(see a compact outline of the basic ideas
behind these approaches in Appendix A and, in particular,
in its subsection A.1).}

In the framework of the innovated theory
the apparently unavoidable nature of the regularization
after quantization has been put under question-mark \cite{catast}.
It has been noticed that
the disappearance or survival of singularities
may be model-dependent.
Thus, in particular, one has to conclude that
a strictly quantum Big Bang
can still be treated as a singularity-representing
extreme of a conventional unitary {\em quantum\,} evolution \cite{init}.

In our present paper, the occurrence
and role of some
strictly quantum singularities
will be discussed. They will be interpreted as
an inseparable part of
a remarkable non-Hermitian (or rather quasi-Hermitian) collapse
or, in opposite direction, of
another specific process of a
non-Hermitian singularity unfolding.

For the sake of definiteness, a
schematic model
will be only considered.
It will be
shown to
exhibit a number of counterintuitive features.
In particular, we will emphasize that
a pair of some of its
neighboring bound or resonant states
may merge
at a value of
a parameter called
exceptional point (EP, cf. \cite{Kato}
or \textcolor{black}{the subsection A.2 of}
Appendix A below).

In contrast to several rather sceptical conclusions
about the model
as reached in our recent
contribution to conference proceedings \cite{init},
we will be able to report
a significant progress
in our understanding of the underlying quantum dynamics.
In particular, in several benchmark special cases
we will be able to
prove the existence of the EP singularities which will appear
localizable
non-numerically.

\section{The model}

The kinetic energy of a quantum particle
which moves freely
along an equidistant
1D lattice is represented by a discrete Laplacean \textcolor{black}{\cite{grid}}.
In conventional textbooks
such a motion is often studied as
restricted
to a finite segment of the lattice,
with the most common
Dirichlet boundary conditions
imposed at its ends.
The energy levels
can be then found as eigenvalues of
Hermitian quantum $N$-by-$N$-matrix Hamiltonian
 \be
 H_{}^{(N)}=
 \left[ \begin {array}{ccccc}
  2&-1&0
 &\ldots&0
 \\
 \noalign{\medskip}-1&2&-1&\ddots&\vdots
 \\
 \noalign{\medskip}0&-1&\ddots&\ddots
 &0
 \\
 \noalign{\medskip}\vdots&\ddots&\ddots&2&-1
 \\
 \noalign{\medskip}0&\ldots&0&-1&2
 \end {array} \right]\,.
 \label{heKat}
 \ee
The model is exactly solvable because its eigenvectors
can be sought in the form of superposition of
classical Tschebyshev
polynomials \textcolor{black}{\cite{Ryshik}}.

In our older paper \cite{[17]}
we revealed that the latter form of solvability
survives
a \textcolor{black}{certain} generalization.
\textcolor{black}{The essence of the generalization consists in a}
transition to the
boundary-controlled parameter-dependent model
\textcolor{black}{and Hamiltonian}
 \be
 H_{}^{(N)}(z)=
 \left[ \begin {array}{ccccc}
  2-z&-1&0
 &\ldots&0
 \\
 \noalign{\medskip}-1&2&-1&\ddots&\vdots
 \\
 \noalign{\medskip}0&-1&\ddots&\ddots
 &0
 \\
 \noalign{\medskip}\vdots&\ddots&\ddots&2&-1
 \\
 \noalign{\medskip}0&\ldots&0&-1&2- z^*
 \end {array} \right]\,
  \,
 \label{usKa8t}
 \ee
in which the parameter \textcolor{black}{itself} can be complex,
\textcolor{black}{$z \in {\mathbb{C}}$:}
\textcolor{black}{A few other related technical details
can be found summarized in Appendix C below}.

Surprisingly enough,
even the unconventional and manifestly non-Hermitian
$N-$site lattice
version of
such a quantum square well model with $z \notin {\mathbb{R}}$
(i.e., in effect,  with
the complex Robin boundary conditions, \textcolor{black}{cf.
Appendix C})
can be attributed a more or less conventional physical
probabilistic interpretation.
Indeed, \textcolor{black}{in
\cite{[17]}}
we managed to show that there exists a non-empty
complex domain ${\cal D}$ of parameters $z$
at which the spectrum of the model remains
strictly real and non-degenerate.

The
operator
can serve, therefore,
as an exactly solvable stationary
toy-model Hamiltonian
fitting the conventional quantum mechanics of unitary systems
in its recent  quasi-Hermitian reformulation
(cf. \cite{Geyer} and also \cite{book,Carl,ali,Dyson,Dieudonne}).
The manifest non-Hermiticity of matrix $H_{}^{(N)}(z)$ with complex
$z\in {\cal D}$,
reflects merely the fact that our conventional tacit acceptance
of
the most common $N-$dimensional Hilbert space
$\mathcal{H}^{(N)}_{mathematical}=\mathbb{C}^N$
is unphysical.
In a way recalled in Appendix A,
its necessary
conversion into another, acceptable physical
Hilbert space $\mathcal{H}^{(N)}_{physical}$
is more or less straightforward.

The goal
is to be achieved via an amended inner-product metric $\Theta$.
The details
of the underlying theory can be found
explained in \cite{Geyer}
or in Appendix B below.
On these grounds
one can conclude that for the evolution which is
generated by a preselected non-Hermitian but quasi-Hermitian Hamiltonian
$H\neq H^\dagger$
with real spectrum, the unitarity can be guaranteed
by the condition,
 \be
 H^\dagger\Theta^{}=\Theta^{}\,H\,,
 \label{dieudoce}
 \ee
i.e., by the
Dieudonn\'{e}'s \cite{Dieudonne}
quasi-Hermiticity property of $H$.

The assignment $H \to  \Theta$
is not unique.
In applications,
the construction of one or more operators $\Theta$
may
represent a decisive
technical challenge
(see, e.g., an extensive
review of this item in \cite{ali}).
For the stationary model (\ref{usKa8t}), therefore,
such an assignment has been performed,
in \cite{[17]},
in explicit manner. A brute-force
solution of the finite set of
the $N^2$ algebraic equations (\ref{dieudoce})
for the unknown matrix elements of $\Theta$
has been used for the purpose.

The
demonstration of feasibility of the assignment $H \to  \Theta$
reconfirmed
the
appeal
of quantum mechanics in its stationary
quasi-Hermitian formulation of reviews \cite{Geyer,ali,Brody}.
Incidentally, the practical use of the formalism becomes
much more technically complicated when one
omits the condition of stationarity.
Still, a
full conceptual
consistency of quantum mechanics in its non-stationary
quasi-Hermitian formulation
can be achieved (cf. \cite{Faria,timedep,SIGMA,Fring,NIP,Android}).

Along the latter lines,
the constructions based on the non-stationary
and
non-Hermitian observable
Hamiltonians
remained difficult but still feasible.
Recently, fresh
developments in the field were
initiated
by Fring et al \cite{[34]} and, independently,
by Matrasulov et al
\cite{Matrasulov}.
In both of these collaborations
it has been clarified that
it will make good sense to extend
the applications of the quasi-Hermitian
quantum mechanics to
the unitary quantum systems in which the quasi-Hermitian quantum
observables
become manifestly
time-dependent.

The new optimism has also been advocated
in our recent study \cite{PS1} where we decided to replace
the stationary solvable toy-model of
Eq.~(\ref{usKa8t})
by its
non-stationary generalization containing
a nontrivial, non-constant complex function of time $z=z(t)$.
Still, a number of questions remained unanswered (cf.
their presentation \cite{init} during a last-year conference).
And precisely this
survival of open questions
also motivated our
present return to
the model
and to its upgraded analysis, with the main attention shifted
to the study of its genuine quantum singularities.

\section{Specific features of boundary-controlled dynamics
}


One of the best visible and phenomenologically most
deplorable gaps
in our understanding of both the stationary
and non-stationary versions
of model (\ref{usKa8t})
can be seen in the
questions
concerning the
existence and, if they do exist, the localization
of its singularities.
These questions remained unanswered
in \cite{PS1}. Moreover, only a very few
concise
answers were provided later in \cite{init}.

In the latter study, indeed,
the questions concerning
the genuine quantum EPs caused by boundary conditions
have only been addressed via several numerical illustrative examples.
The reason was
not only the lack of non-numerical insight but also
the lack of space  as provided  by the proceedings.
Both of these shortcomings
appeared decisive.

More recently
we returned to the problem, and we
arrived at a much better and predominantly non-numerical
understanding of the
role and structure of singularities. Thus, we are now going
to complement the key messages of
\cite{init} and to enrich and
enhance this note to a full-paper format.

\subsection{The even$-N$ models}

Due to a certain favorable hidden symmetry of
matrices (\ref{usKa8t}) with a purely imaginary $z(t)$
the localization of EPs
appeared
comparatively easy at
the even matrix dimensions $N=2,4,\ldots$.
After a reparametrization of $z(t)=i\, \sqrt{1-r^2(t)}$
we also found, in \cite{[17]}, that
it makes sense to treat the new variable $r(t)$
as a real and, say, non-decreasing function of time.

In {\it loc. cit.} we mentioned two
main consequences of the reparametrization.
First, the model only proved non-Hermitian
(i.e., of our methodical interest)
at $r^2(t) \leq 1$.
Second,
the spectrum of $H^{(N)}(t)$
remains smoothly time-dependent
and real at all of the real parameters $r(t) \in \mathbb{R}$.
Thus, the non-Hermiticity -- Hermiticity quantum
phase transition (cf. \cite{[34b]}) is smooth. Here,
the phenomenon can be found illustrated in Figure \ref{ufigone}
where we choose $N=6$.

\begin{figure}[h]
\begin{center}
\epsfig{file=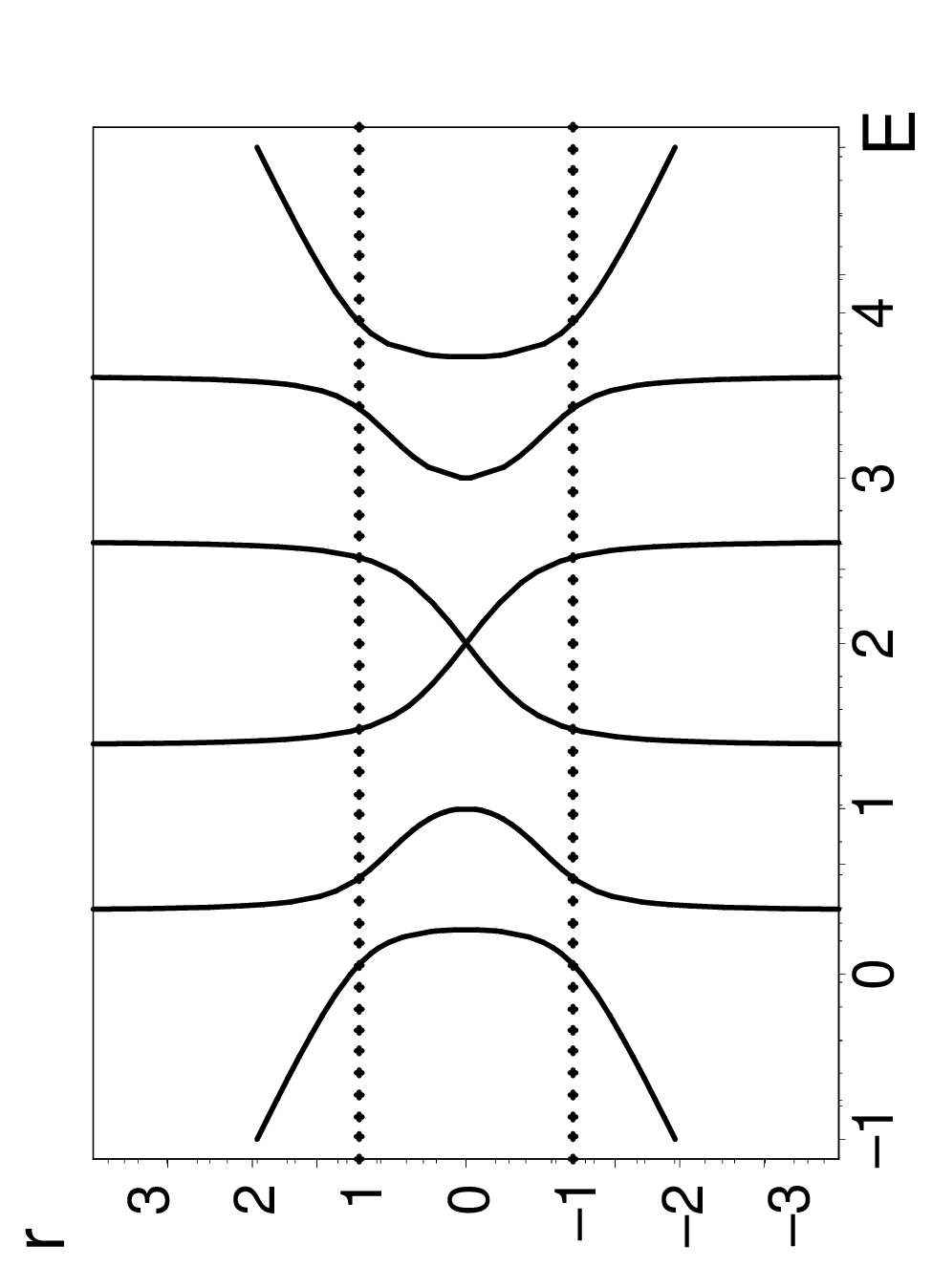,angle=270,width=0.35\textwidth}
\end{center}
\vspace{-2mm}\caption{Parameter $r$ versus energy $E$
for $z=i\, \sqrt{1-r^2}$
at $N=6$. Two auxiliary \textcolor{black}{dotted} lines of $r=\pm 1$
mark the boundary of the non-Hermiticity of matrix $H^{(N)}(z)$
of Eq.~(\ref{usKa8t}).
 \label{ufigone}}
\end{figure}

In a subsequent commentary \cite{init}
we emphasized that at any even $N$
the spectrum
remains discrete and non-degenerate, first of all,
in
the standard
Hermitian regime (i.e., at $r^2>1$) . In contrast,
in a way well visible also in Figure \ref{ufigone} here,
the loss of
the manifest Hermiticity at $r^2 \leq 1$
has been mentioned to open
the possibility of a degeneracy
of a pair of
energy levels in the maximal non-Hermiticity limit
of $r \to 0$.

Besides a numerical demonstration of
these results, and besides
their graphical representations, it
would be also desirable to prove them analytically.
Indeed, only then one can conclude that
at any even dimension $N$
we always encounter,
at $r=0$, a genuine
non-Hermitian degeneracy.

\subsection{Odd$-N$ problem}

There were several reasons why we failed
to describe the exceptional point singularities
at the odd matrix dimensions $N = 3, 5, \ldots$
in \cite{init}.
In what follows, these cases will appear to be tractable,
first of all, thanks to a simplification of the task.
Trivial as it may look, it will consist
in an elementary shift of the
energy scale ($E \to E-2$)
so that our toy-model Hamiltonian
will acquire its perceivably
more transparent equivalent matrix form
 \be
 H_{}^{(N)}(t)=
 \left[ \begin {array}{ccccc}
  -z(t)&-1&0
 &\ldots&0
 \\
 \noalign{\medskip}-1&0&-1&\ddots&\vdots
 \\
 \noalign{\medskip}0&-1&\ddots&\ddots
 &0
 \\
 \noalign{\medskip}\vdots&\ddots&\ddots&0&-1
 \\
 \noalign{\medskip}0&\ldots&0&-1&- z^*(t)
 \end {array} \right]\,.
 \label{Ka8t}
 \ee
This will enable us to see that
at odd $N$
there exists an anomalous
bound-state-energy
root $E=0$
of the secular equation
which is $r-$independent.
For this reason
one can immediately deduce that there is no level-crossing
EP degeneracy at $r=0$.

In \cite{init},
the latter observation forced us to add a
non-vanishing real part to $z(t)$. This,
unfortunately, made the secular equation so
complicated that we
had to resort, in the major part of paper \cite{init},
to the mere purely numerical
study of our toy model (\ref{usKa8t})
(see also a concise summary of these efforts in Appendix C below).

This was a technical complication which
certainly limited the appeal of our results.
Due to these obstacles
we only managed to describe and
understand the mechanism of the emergence of
the EP singularity, via two illustrative pictures,
just
at the first nontrivial choice of dimension $N=5$.
Later on, a remarkable progress has been achieved in the
non-numerical forms to be
reported in the present paper.



\section{Solvability: Sturmians}

Although  the spectrum of Figure \ref{ufigone}
as assigned to matrix (\ref{usKa8t})
at $N=6$ does not seem to be too complicated,
we did not mange to reveal, in it, any traces of
the symmetries
as observed in the picture.
In \cite{init} we still conjectured that
the formal core of the feasibility of the
localization of the EPs at $N=6$
has to be seen in the user-friendly
structure of the related secular equation.

In {\it loc. cit.}
there was no space to
make the argument explicit, and to
support the claim
by the formulae. This is to be
done in what follows.

\begin{table}[th]
\caption{Sturmian solutions  of
secular equations for the present simplified model (\ref{Ka8t}).
 } \label{pexp3b}
\begin{center}
\begin{tabular}{||c|c||}
\hline \hline
  $N$ &
  {$r^2(E^2)$ }\\
 \hline \hline
 2&
 $E^2$
 \\
 3&
  $E^2-1$
 \\
 4&
 $E^2\,(E^2-2)/(E^2-1)$
 \\
 5&
 $(E^4-3\,E^2+1)/(E^2-2)$
 \\
 6&
 $E^2\,(E^2-1)\,(E^2-3)/(E^4-3\,E^2+1)$
 \\
 7&
 $(E^6-5\,E^4+6\,E^2-1)/[(E^2-1)(E^2-3)]$
 \\
 8&
 $E^2\,(E^6-6\,E^4+10\,E^2-4)/(E^6-5\,E^4+6\,E^2-1)$
 \\
 9&
 $(E^8-7\,E^6+15\,E^4-10\,E^2+1)/(E^6-6\,E^4+10\,E^2-4)$
 \\
  \hline \hline
\end{tabular}
\end{center}
\end{table}

\subsection{Non-numerical localizations of EPs}

The formulae of paper \cite{init}
become almost miraculously simplified
after the transition to the shifted-scale model (\ref{Ka8t}).
This can be found demonstrated
in our present Tables~\ref{pexp3b} and~\ref{pexp3a}.
We see there, in particular, that the upgraded $N=6$ item is
much more compact and transparent than its unshifted-scale
predecessor of paper \cite{init}.
Also the existence of certain additional
parity-symmetry breaking sub-factorizations of
Sturmian \cite{Sturm}
couplings $r^2(E^2)$
as sampled in Table \ref{pexp3a} appeared
not only equally unexpected
but also fairly useful, especially
for our present purposes
of the localization of the EPs
(see the details below).

The
manifest $E \to -E$
symmetry of the Sturmians $r^2=r^2(E^2)$ is
visible also in Figure~\ref{ufigone}.
This is a benefit
of the model which remains visible even
after
the factorization
of the formulae
as displayed
in Table~\ref{pexp3b}.
Serendipitously, one reveals
also another, ``hidden'' symmetry
of these results
by which,
up to a certain $E^2-$factor anomaly, {\em all\,} of
the factors found in the numerators at some $N$
are found relocated into denominators at $N+1$.


\begin{table}[th]
\caption{Sample of further auxiliary factorizations \label{pexp3a}}
\begin{center}
\begin{tabular}{||c|c||}
\hline \hline
  N &  denominator of {$r^2(E^2)$ }\\
 \hline \hline
 6&$ E^4-3\,E^2+1=(E^2-1+E)(E^2-1-E)$
 \\
 8&$ E^6-5\,E^4+6\,E^2-1 = (E^3-2\,E+E^2-1) (E^3-2\,E-E^2+1)$
 \\
 9&$ E^6-6\,E^4+10\,E^2-4=(E^2-2)(E^4-4\,E^2+2) $
 \\
  \hline \hline
\end{tabular}
\end{center}
\end{table}


%
%

In \cite{init}, one of
our main goals was to prove that the EP degeneracy
can be localized even when the
matrix dimension $N$ is odd.
Unfortunately, the task remained unfulfilled.
Indeed,
we only managed to
explain that one has to use
a shifted complex form of
parameter $z(t)=y(t)+{\rm i}\,\sqrt{1-r^2(t)}$ containing
a non-vanishing constant or time-dependent
real shift $y(t) \neq 0$.

We only found that
the central level crossing as presented in
Figure~\ref{ufigone} at $N=6$ disappears at odd $N$.
For illustration we choose $N=5$
and used a purely numerical approach
-- see a compact outline of the argumentation in Appendix C below.
Now, these results will be complemented
by their study using rigorous analytic techniques.

\subsection{{Example}}

For the purposes of our present search of the EPs
the role of the shift $u$ in
 \be
 z=-u+{\rm i}\,\sqrt{1-r^2}
 \ee
is trivial at
$N=2$. Its change just moves the
origin of the energy scale.
This means that
it is sufficient to confirm
the existence of the
EP singularity at $u=E=0$ (cf. the first line in Table \ref{pexp3b}).

This is an elementary but explicit confirmation of the
existence of a singularity
tractable as the Kato's exceptional point.
This is a mathematically rigorous result which is
an immediate consequence of the
following elementary observation and construction.

\begin{lemma}.
Matrix
 \be
 H^{(2)}_{(EP)}(u)=
 \left[ \begin {array}{cc}u -i&-1\\
 \noalign{\medskip}-1&u+i\end {array} \right]
 \ee
is not diagonalizable. It can only be
given the canonical Jordan form via
a matrix-intertwining relation
 \be
 H^{(2)}_{(EP)}\,Q^{(2)}_{(EP)}=
 Q^{(2)}_{(EP)}\,
 \left[ \begin {array}{cc} u&1\\
 \noalign{\medskip}0&u\end {array} \right]
  \label{11}
 \ee
where the invertible intertwiner
 \be
 Q^{(2)}_{(EP)}=
 \left[ \begin {array}{cc} -i&1\\\noalign{\medskip}-1&0\end {array} \right]
 \ee
is usually called transition matrix.
\end{lemma}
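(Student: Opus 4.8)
The plan is to verify the two assertions---non-diagonalizability and the explicit Jordan reduction---by direct $2\times 2$ computation, since at this dimension everything is fully explicit. First I would compute the characteristic polynomial of $H^{(2)}_{(EP)}(u)$. Expanding the determinant gives
\be
\det\left(H^{(2)}_{(EP)}(u)-\lambda\,I\right)=(u-i-\lambda)(u+i-\lambda)-1=(u-\lambda)^2+1-1=(u-\lambda)^2,
\ee
so that $\lambda=u$ is the only eigenvalue and it carries algebraic multiplicity two. This already marks the matrix as a candidate for an exceptional-point degeneracy, and it pins down the diagonal entries $u$ of the prospective Jordan form.

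The decisive step is to show that the eigenvalue $u$ is geometrically simple, i.e. that the eigenspace is one-dimensional. For this I would examine
\be
H^{(2)}_{(EP)}(u)-u\,I=\left[\begin{array}{cc}-i&-1\\-1&i\end{array}\right],
\ee
whose determinant vanishes while its first row is nonzero, so that its rank equals one. Hence the geometric multiplicity is $2-1=1<2$, strictly below the algebraic multiplicity: the matrix is defective and therefore not diagonalizable. I regard this rank computation as the genuine heart of the statement, because it is precisely the nonvanishing off-diagonal coupling that prevents the eigenspace from filling out and forces a single Jordan block---this is what realizes the coalescence of the two levels at the Kato exceptional point.

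It then remains to exhibit the Jordan chain explicitly. I would take the eigenvector $v_1=(-i,-1)^\top$ spanning $\ker(H^{(2)}_{(EP)}-u\,I)$ as the first column of $Q^{(2)}_{(EP)}$, and a generalized eigenvector $v_2$ solving $(H^{(2)}_{(EP)}-u\,I)\,v_2=v_1$ as the second column; the choice $v_2=(1,0)^\top$ works. Assembling these into $Q^{(2)}_{(EP)}$ and multiplying out both sides of the claimed intertwining relation~(\ref{11}) confirms the equality entry by entry, while $\det Q^{(2)}_{(EP)}=1\neq 0$ guarantees that $Q^{(2)}_{(EP)}$ is an admissible (invertible) transition matrix. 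Since all four steps are elementary algebra, I do not anticipate any real obstacle beyond bookkeeping; the one point worth stating carefully is the strict inequality between geometric and algebraic multiplicity, which is exactly what distinguishes a true exceptional point from an accidental but still diagonalizable degeneracy.
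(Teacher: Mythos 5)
Your proposal is correct; every computation checks out (the characteristic polynomial $(u-\lambda)^2$, the rank-one matrix $H^{(2)}_{(EP)}-uI$, the Jordan chain $v_1=(-i,-1)^\top$, $v_2=(1,0)^\top$, and the final verification of Eq.~(\ref{11})). The paper itself gives no written proof of this lemma: it treats the statement as an ``elementary observation and construction'' whose justification is pure insertion --- multiply out both sides of the intertwining relation with the displayed $Q^{(2)}_{(EP)}$, note $\det Q^{(2)}_{(EP)}=1\neq 0$, and conclude; non-diagonalizability is then read off from the fact that $H^{(2)}_{(EP)}$ is similar to a single non-trivial Jordan block, invoking uniqueness of the Jordan canonical form. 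Your route differs in two useful ways. First, you prove defectiveness intrinsically, via the rank of $H^{(2)}_{(EP)}-uI$ and the strict inequality of geometric versus algebraic multiplicity, so the non-diagonalizability claim stands on its own without any appeal to Jordan-form uniqueness; as you rightly stress, this is also the statement that the two eigenvectors (not just the two eigenvalues) coalesce, which is what makes the degeneracy a genuine Kato exceptional point rather than an accidental diagonalizable one. Second, you derive the transition matrix from the Jordan chain instead of pulling it out of thin air and verifying it, which explains where the columns of $Q^{(2)}_{(EP)}$ come from and generalizes directly to the $N=3$ and $N=4$ cases treated later in the paper. The paper's verification is shorter; yours is more self-contained and more instructive.
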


\section{Analytically solvable benchmark models}

\subsection{The first nontrivial model: $N=3$}

Even the choice of matrix dimension as small as
$N=3$ makes
our insight
in the spectrum perceivably worsened.
The reason is that the necessary
Cardano formulae yielding the energies are
far from nice.

\begin{figure}[h]
\begin{center}
\epsfig{file=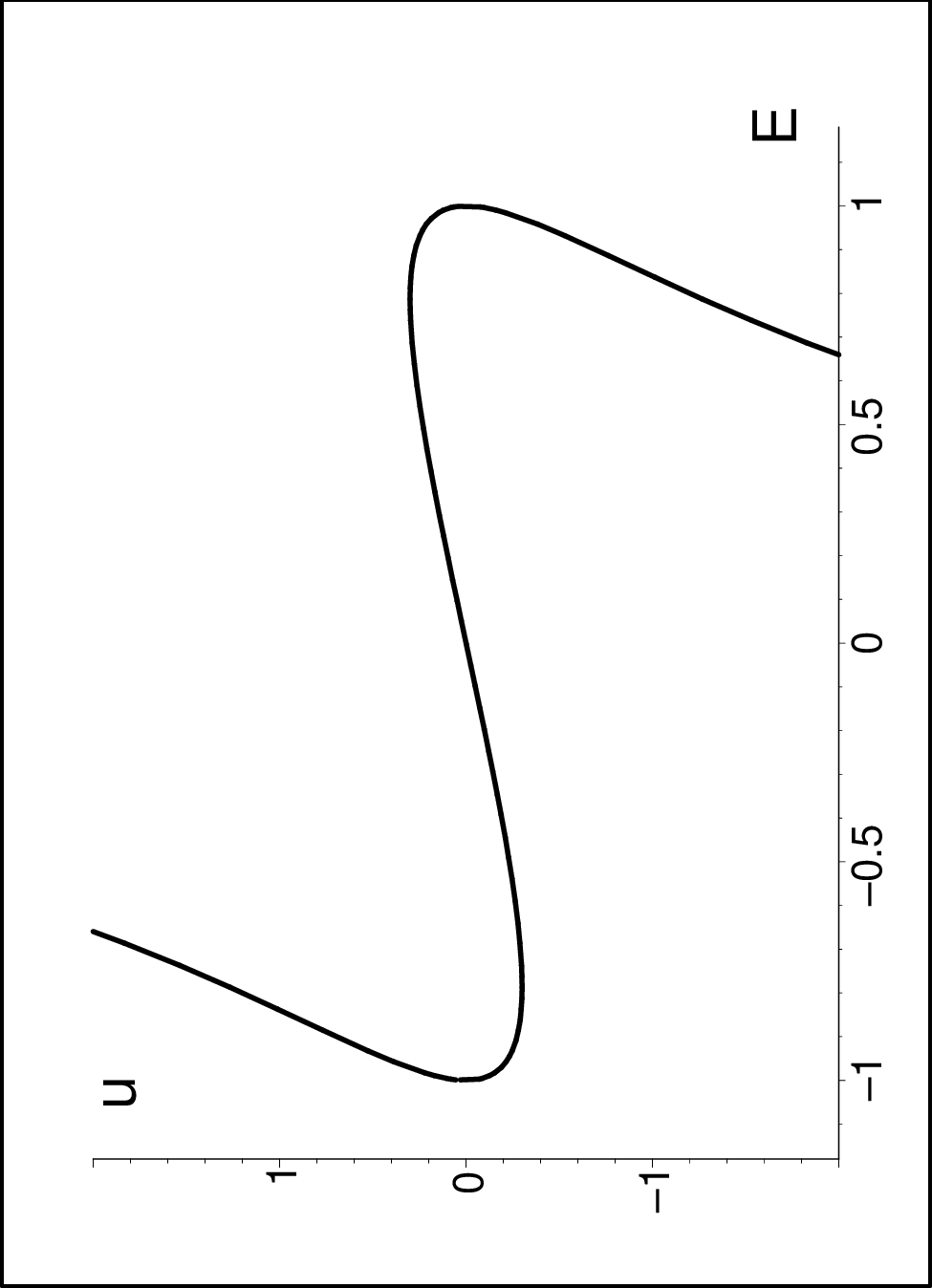,angle=270,width=0.3\textwidth}
\end{center}
\vspace{-2mm}\caption{Graph of the Sturmianic curve
$\textcolor{black}{u= \ }  u(E)$ at $N=3$.
 \label{3drufi}}
\end{figure}

The $r \to -r$ symmetry of the spectrum
(usable, after all, at any $N$)
enables us to deduce that
the deformation of the spectral curves
as caused by the changes of the shift $u$
can only lead to a degeneracy of some levels at
$r=0$.
Hence, it is sufficient to study the spectrum of
matrix
 \be
 H^{(3)}(u)=
 \left[ \begin {array}{ccc} u-i&-1&0\\
 \noalign{\medskip}-1&0&-1\\\noalign{\medskip}0&-1&u+i
 \end {array} \right]
 \label{10}
 \ee
i.e., the roots of its characteristic polynomial
 \be
 P(u,E)=
 {{\it E}}^{3}-2\,u{{\it E}}^{2}- \left( 1-{u}^{2} \right)
 {\it E}+2\,u\,.
  \label{14}
 \ee
An intuitive insight in the form of the spectrum
is provided, in Sturmian representation,
by Figure~\ref{3drufi}. In this
picture we see that a pairwise confluence of
the levels can only be achieved at the minimum or maximum of the
bounded part
of the curve given by one of the Sturmian roots
of equation $P(u,E)=0$, viz., by formula
 \be
  u(E)={\frac {{{\it E}}^{2}-1+\sqrt {-{{\it E}}^{2}+1}}{{\it E
  }}}\,.
  \label{16}
 \ee
For the rigorous proof of the fact that the confluence
of the energies is of the Kato's type, i.e., that it is
accompanied also by the confluence
of the respective eigenvectors,
it is again sufficient to construct the transition matrix
and/or to prove the non-diagonalizability of the Hamiltonian.

\begin{lemma}.
Matrix
 \be
\left[ \begin {array}{ccc} -1/4\,
\left (\sqrt {-2+2\,\sqrt {5}}\right )^{3}&0&0\\
\noalign{\medskip}0&1/2\,\sqrt {-2+2\,
\sqrt {5}}&1\\\noalign{\medskip}0&0&1/2\,\sqrt {-2+2\,\sqrt {5}}
\end {array} \right]
 \ee
is the Jordan-block representation of the toy model (\ref{10}) at
its right EP singularity.
\end{lemma}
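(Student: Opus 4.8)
The plan is to locate the exceptional point in closed form, read off the three roots of the secular equation there, and then certify that the repeated root carries a genuine $2\times 2$ Jordan block rather than a diagonalizable degeneracy. First I would extract the EP directly from the Sturmian branch (\ref{16}). Since a pairwise confluence of levels occurs exactly at an extremum of the bounded part of $u(E)$, I impose $du/dE=0$. Writing $w=\sqrt{1-E^2}$ with $dw/dE=-E/w$, clearing denominators, and substituting $E^2=1-w^2$, the condition collapses to the cubic $w^3-2w+1=0$, which factors as $(w-1)(w^2+w-1)=0$. The root relevant below the Hermitian threshold is $w=(\sqrt5-1)/2$, so $E_{EP}^2=1-w^2=(\sqrt5-1)/2$ and the right (positive-$E$) extremum sits at $E_{EP}=\frac{1}{2}\sqrt{2\sqrt5-2}$. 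I would flag here the clean identity $\sqrt{1-E_{EP}^2}=E_{EP}^2$, equivalently $E_{EP}^4+E_{EP}^2-1=0$; this minimal relation is what will later collapse all radical algebra. Substituting $E_{EP}$ into (\ref{16}) then gives the real shift $u_{EP}=\frac{1}{4}(3-\sqrt5)\sqrt{2\sqrt5-2}$, together with the useful byproduct $E_{EP}\,u_{EP}=\sqrt5-2$.

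Next I would pin down the third eigenvalue. Because $\mathrm{tr}\,H^{(3)}(u)=2u$, the three roots of (\ref{14}) sum to $2u_{EP}$; with a double root at $E_{EP}$ this forces $E_3=2u_{EP}-2E_{EP}=-\frac{1}{4}s^{3}$ where $s=\sqrt{2\sqrt5-2}$. This is precisely the $(1,1)$ entry of the asserted canonical matrix, while its repeated diagonal entry $\frac{1}{2}s$ is $E_{EP}$ itself. A parallel check against the constant term of (\ref{14}), which fixes the product of the roots as $-2u_{EP}$, would confirm $E_3$ independently.

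It then remains to show that the degeneracy is of Kato's type, i.e.\ that the double eigenvalue has geometric multiplicity one, so that a single $2\times2$ Jordan block is forced. For this I examine $M=H^{(3)}(u_{EP})-E_{EP}I$ and exhibit one nonvanishing $2\times2$ minor. The top-left minor equals $-E_{EP}(u_{EP}-i-E_{EP})-1$, and upon inserting $E_{EP}\,u_{EP}=\sqrt5-2$ and $E_{EP}^2=(\sqrt5-1)/2$ it simplifies to $\frac{1}{2}(1-\sqrt5)+i\,E_{EP}$. Its imaginary part $E_{EP}>0$ cannot cancel, so the minor is nonzero, whence $\mathrm{rank}\,M=2$ and $\dim\ker M=1$. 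Combined with the simple eigenvalue $E_3$ this yields exactly the claimed Jordan form. Equivalently, and more in the spirit of the $N=2$ construction above, I could assemble the transition matrix $Q$ from the eigenvector for $E_3$, the eigenvector for $E_{EP}$, and a generalized eigenvector solving $Mv=v_{\mathrm{eig}}$, and then verify the intertwining $H^{(3)}(u_{EP})\,Q=Q\,J$ directly.

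The main obstacle is purely the bookkeeping forced by the nested radical $s=\sqrt{2\sqrt5-2}$: every intermediate quantity must be reduced modulo the two minimal relations $s^2=2\sqrt5-2$ and $E_{EP}^4+E_{EP}^2-1=0$ to reach closed form, and without them the eigenvector components look intractable. The decisive, genuinely non-numerical point is the surviving imaginary term $i\,E_{EP}$ in the critical minor, which certifies non-diagonalizability at the right EP with no recourse to floating-point evaluation.
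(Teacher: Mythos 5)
Your proposal is correct, and it coincides with the paper's proof in its first half: like the paper, you localize the right EP by extremizing the Sturmian branch (\ref{16}); your cubic $w^3-2w+1=0$ in $w=\sqrt{1-E^2}$ is equivalent to the paper's cubic $(1-x)(1+x)^2=1$ in $x=E^2$ (both reduce, after discarding the spurious $E=0$ root, to the golden-ratio quadratic $x^2+x-1=0$, and in fact $w=x$ here), and your closed forms $E_{EP}=\frac{1}{2}\sqrt{2\sqrt5-2}$ and $u_{EP}=\frac{1}{4}(3-\sqrt5)\sqrt{2\sqrt5-2}$ agree with the paper's $u^{(EP)}\approx 0.3002831061$. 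Where you genuinely diverge is the certification of the Jordan structure. The paper notes that one may either ``construct the transition matrix and/or prove the non-diagonalizability,'' and its actual proof takes the first option, asserting that insertion into the $N=3$ analogue of the intertwining relation (\ref{11}) does the job (without displaying $Q^{(3)}$). You take the second option and make it fully explicit: the trace identity $2E_{EP}+E_3=2u_{EP}$ pins the simple eigenvalue to $E_3=-\frac{1}{4}\bigl(\sqrt{2\sqrt5-2}\bigr)^3$, which is exactly the $(1,1)$ entry of the asserted canonical matrix, and the nonvanishing top-left $2\times 2$ minor of $H^{(3)}(u_{EP})-E_{EP}I$, equal to $\frac{1}{2}(1-\sqrt5)+{\rm i}E_{EP}$ with an uncancellable imaginary part, forces $\mathrm{rank}=2$, hence geometric multiplicity one and a single $2\times 2$ Jordan block. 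Your route buys a self-contained, eigenvector-free and manifestly non-numerical certificate (the paper's ``straightforward insertion'' is left to the reader and requires building the generalized eigenvector), while the paper's route buys the explicit transition matrix itself, which is the object of interest if one wants the full analogue of the $N=2$ Lemma rather than just the canonical form. All of your intermediate identities ($E_{EP}u_{EP}=\sqrt5-2$, $\sqrt{1-E_{EP}^2}=E_{EP}^2$, the product-of-roots check) are verifiably correct.
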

\begin{proof} is straightforward and
its short version could
proceed just by insertion
in the $N=3$ analogue of Eq.~(\ref{11}).
What led to the result
was the exact and unique specification (\ref{16}) of the
root of the characteristic polynomial.
The (say, positive) maximum of function (\ref{16})
appeared then determined by the standard rule
$u'(E)=0$, i.e., by the
cubic equation
for $x=E^2$,
  \be
  (1-x)(1+x)^2=1
  \ee
possessing the unique positive closed-form solution
 \be
 x=1/2\,\sqrt{5} -1/2 \approx 0.6180339887\,.
 \ee
The
EP energy
$E \approx 0.7861513775$
is related to the reconstructed EP-supporting shift
 \ben
 u^{(EP)}=1/2\,\sqrt {-2+2\,\sqrt {5}}
 -2\,{\frac {1}{\sqrt {-2+2\,\sqrt {5}}}}+
\sqrt {4\, \left( -2+2\,\sqrt {5} \right) ^{-1}-1}
 \ \approx \ 0.3002831061\,.
 \een
\end{proof}


\subsection{Benchmark model with $N=4$}

The same procedure can be applied to the
$N=4$ toy-model-Hamiltonian
matrix
 \be
 H^{(4)}(u)=
 \left[ \begin {array}{cccc} u-i&-1&0&0\\\noalign{\medskip}-1&0&-1&0
\\\noalign{\medskip}0&-1&0&-1\\\noalign{\medskip}0&0&-1&u+i
\end {array} \right]
 \ee
yielding the
secular polynomial of the fourth order in the energy,
 \be
 E^4-2\,u\,E^3+(-2+u^2)\,E^2+4\,u\,E-u^2\,.
 \ee
Its form is compatible with the existence of the
trivial EP singularity
at $E=0$ and $u=0$.

\begin{figure}[h]
\begin{center}
\epsfig{file=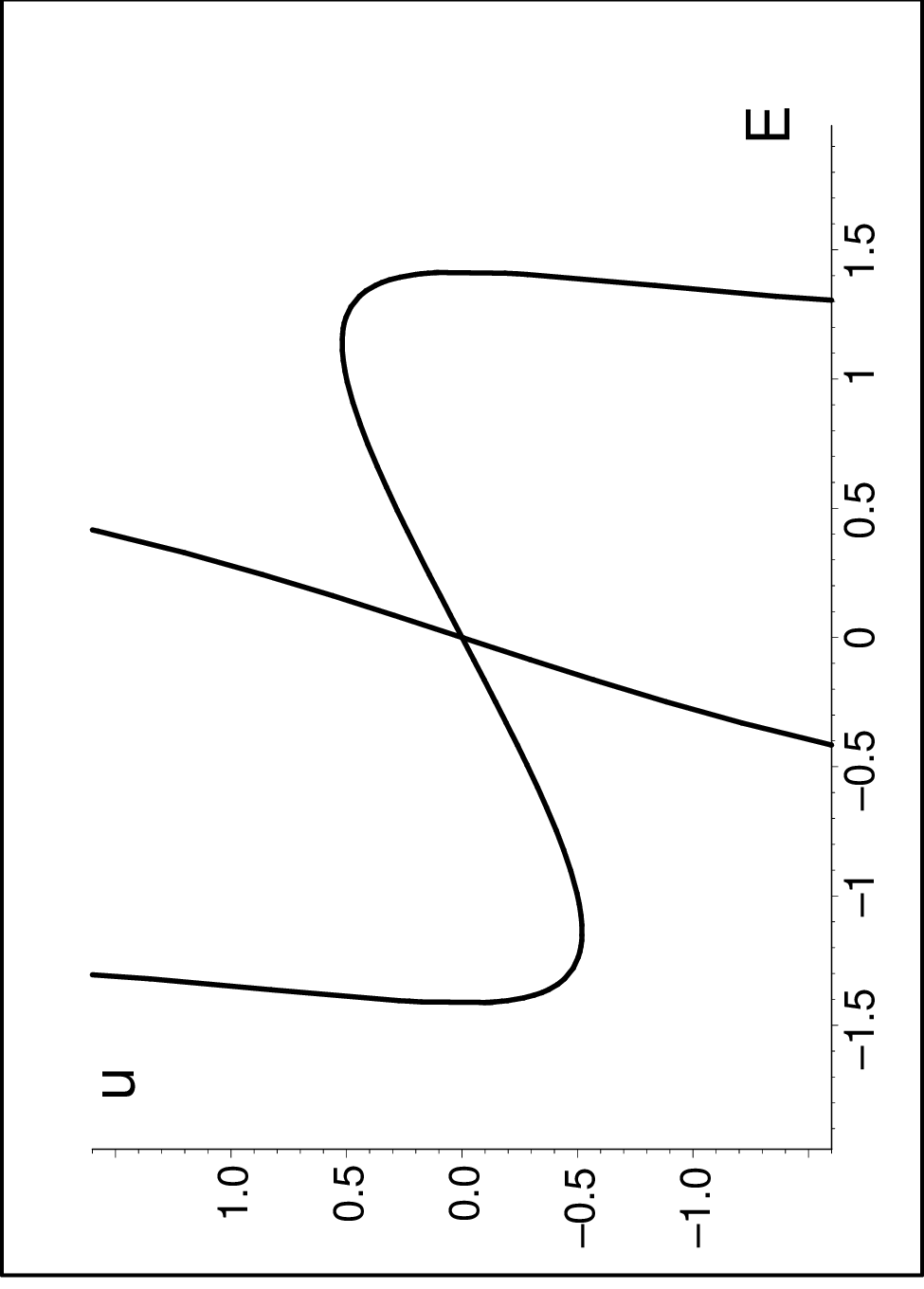,angle=270,width=0.3\textwidth}
\end{center}
\vspace{-2mm}\caption{$u(E)$ for $N=4$.
 \label{4drufi}}
\end{figure}

A nontrivial task can be now formulated as the question and proof
of existence of the other,
``off-central'' EP  singularity or singularities at
some nontrivial shift or shifts $u \neq 0$.

A non-rigorous answer is provided by Figure \ref{4drufi}
in which we see that in a close parallel with the preceding
case of $N=3$, also the  $N=4$ Sturmian curve
 \be
 u(E)= E+
 {\frac {
 \sqrt {2-{{\it E}}^{2}}-1
 } {{{\it E}}^{2}-1}}\,{E}\,
 \ee
has the two off-central $u \neq 0$
extremes indicating the emergence of the EPs.

\begin{figure}[h]
\begin{center}
\epsfig{file=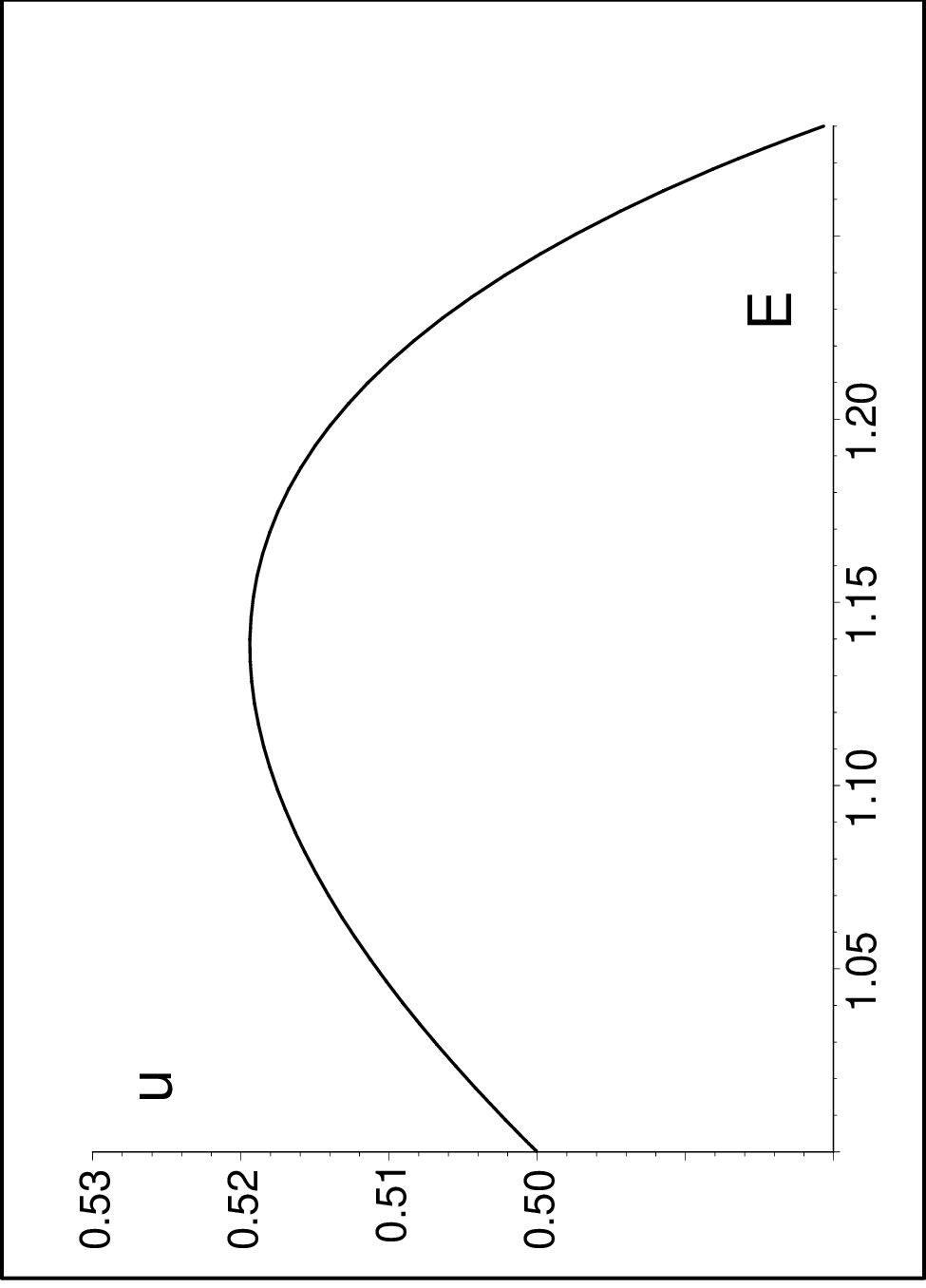,angle=270,width=0.3\textwidth}
\end{center}
\vspace{-2mm}\caption{Graphical localization of the
EP-determining maximum of $u(E)$ at $N=4$.
 \label{4bdrufi}}
\end{figure}

For any practical purposes it is sufficient to
localize the off-central-EP coordinates
$u^{(EP)}$ and $E^{(EP)}$ approximatively,
using a  suitable magnification of the graph
of Figure \ref{4drufi}
(see
Figure \ref{4bdrufi} as a sample
of such a magnification
and graphical localization).
Nevertheless, the rigorous answer
is also accessible.
Along the same lines as above, it
can be obtained
in a comparatively compact form of expression
 \be
 E^{(EP)}=
 1/3\,\sqrt {3\,\sqrt [3]{26+6\,\sqrt {33}}-24\,{\frac {1}{\sqrt [3]{26
 +6\,\sqrt {33}}}}+6}
 \ \approx \ 1.138243270\,.
 \ee
We can conclude that this result
is fully compatible with the graphical solution
as shown in Figure \ref{4bdrufi}.

\section{\textcolor{black}{Beyond $N=4$}}

\subsection{\textcolor{black}{Odd versus even matrix dimensions $N$}}

\textcolor{black}{Comparison of Figures \ref{3drufi}
and \ref{4drufi} reveals a certain
intimate qualitative
correspondence between the positions of the
EP singularities
at
$N=3$ and $N=4$.
Indeed,
the information
about the EPs
which is carried by the
function $u(E)$
at
$N=3$
only differs
from
the information
about the EPs
at
$N=4$
by the emergence of an additional, third, trivial EP at $E=0$.}

\textcolor{black}{It is, naturally,
tempting to conjecture
that
such a correspondence might be extensible
to any larger pair of neighboring matrix dimensions
$N=2k-1$ and $N=2k$.
This, really, opens the possibility
of the generalization of our
results to the models with $k = 3, 4, \ldots$
and, in principle, even with the very large
pairs of
matrix dimensions with
$k \gg 1$.}

\textcolor{black}{For a verification
of the validity and of the possible
explicit forms of such a type of
conjecture
one can feel encouraged
by the survival of simplicity of the corresponding
general Hamiltonians at $r=0$,
 \be
 H^{(N)}(u)=
 \left[ \begin {array}{cccccc}
  u-i&-1&0&\ldots&0&0
 \\
 \noalign{\medskip}-1&0&-1&0&\ldots&0
 \\
 \noalign{\medskip}0&-1&0&\ddots&\ddots&\vdots
 \\
 \noalign{\medskip}0&0&\ddots&\ddots&-1&0
 \\
 \noalign{\medskip}\vdots&\ddots&\ddots&-1&0&-1
 \\
 \noalign{\medskip}0&\ldots&0&0&-1&u+i
 \end {array} \right]
 \ee
yielding the related explicit forms of the
secular polynomials in a more or less routine manner
(the task is left to the readers).}

\subsection{The $N=5$ model revisited}

\textcolor{black}{A verification
of the latter
conjecture has to start
in the first truly nontrivial model with $k=3$
and odd $N=5$.}
The purely numerical analysis of the spectrum of \textcolor{black}{such}
a
``generic odd$-N$'' example of our toy
model~(\ref{usKa8t})
was performed in \cite{init}.
In the light
of Table~\ref{pexp3b}
as well as in the light
of our preceding,
purely analytic description of the
``generic even$-N$''
model with $N=4$
it is possible to expect that
a \textcolor{black}{certain} increase of the complexity of
Sturmians $r^2(E^2)$
\textcolor{black}{would already} enter the scene at $N=6$.

\begin{table}[th]
\caption{Visualization-friendly re-arrangements of some
formulae of Table \ref{pexp3b}}
\begin{center}
\begin{tabular}{||c|c||}
\hline \hline
  N & $r^2(E^2)$ \\
 \hline \hline
 4&
 $E^2-1-1/(E^2-1)$
 \\
 5&
 $E^2-1-1/(E^2-2)$
 \\
 6&
 $E^2-1-(E^2-1)/(E^4-3\,E^2+1)$
 \\
 & $=E^2-1-1/(E^2-2-1/(E^2-1))$
  \\
   \hline \hline
\end{tabular}
\end{center}\label{qexp3a}
\end{table}

This expectation can be
further supported by Table~\ref{qexp3a}
in which
we display certain
partial simplifications
of the Sturmians $r^2(E^2)$ at $N=4$, $N=5$ and $N=6$.
Thus,
along the same methodical lines as used above, the
basic orientation in the structure
and parameter-dependence of the $N=5$ spectrum
can be obtained in full analogy with its $N=4$ predecessor.

\begin{figure}[h]
\begin{center}
\epsfig{file=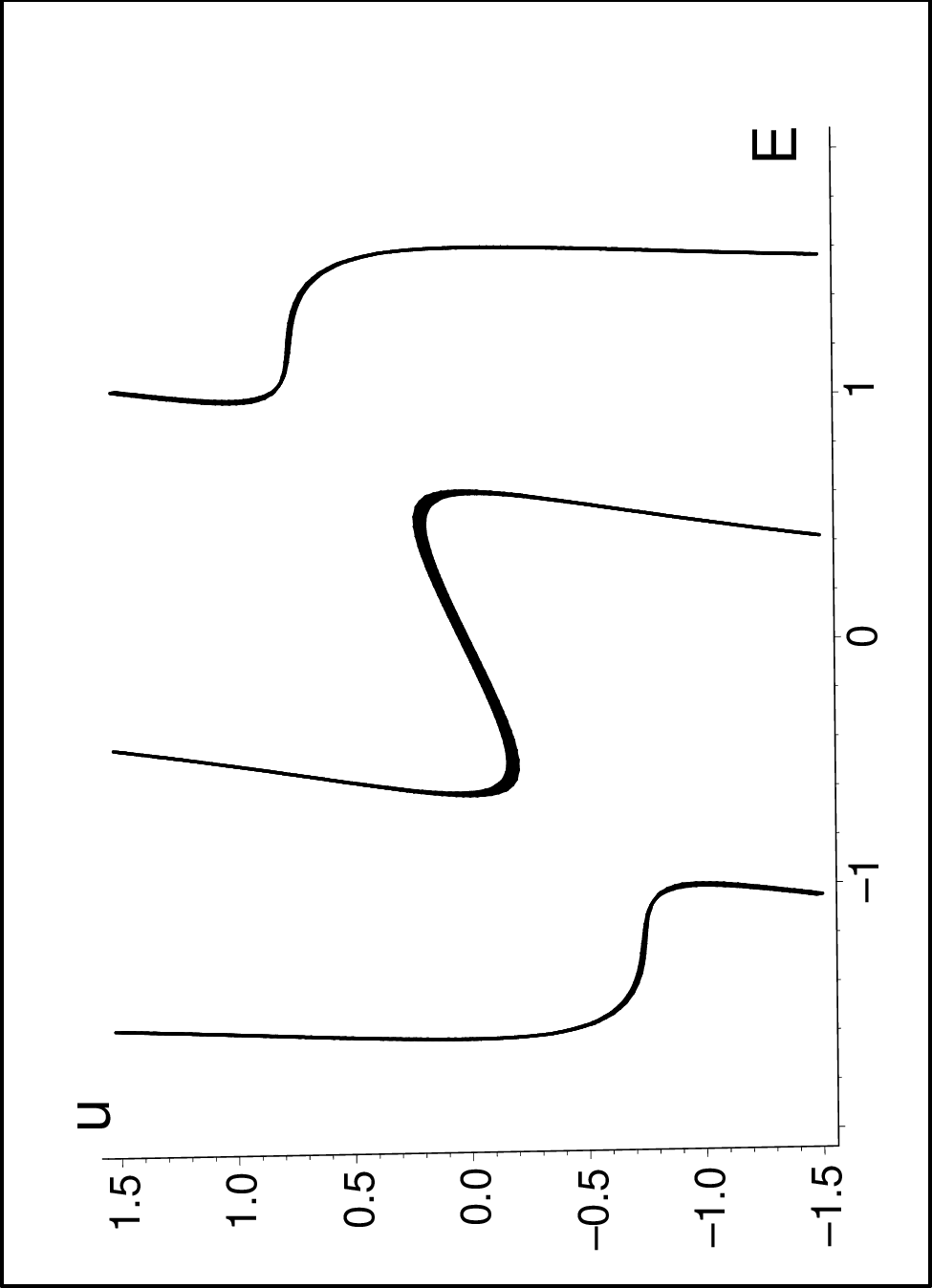,angle=270,width=0.3\textwidth}
\end{center}
\vspace{-2mm}\caption{$u(E)$ for $N=5$.
 \label{5bdrufi}}
\end{figure}

What is to be expected is
the emergence of the two off-central non-Hermitian
EP degeneracies at \textcolor{black}{$r=0$ and at}
some two critical shifts
$u^{(EP)}_{(\pm)}=\pm |u^{(EP)}_{(\pm)}|$.
The expectation is fully confirmed by the numerical
experiments of \cite{init} as well as by our new
numerically generated Figure~\ref{5bdrufi}.
Its inspection reveals that
the interval of $u$ inside which the whole
$N=5$ spectrum remains real
is rather small.

\begin{figure}[h]
\begin{center}
\epsfig{file=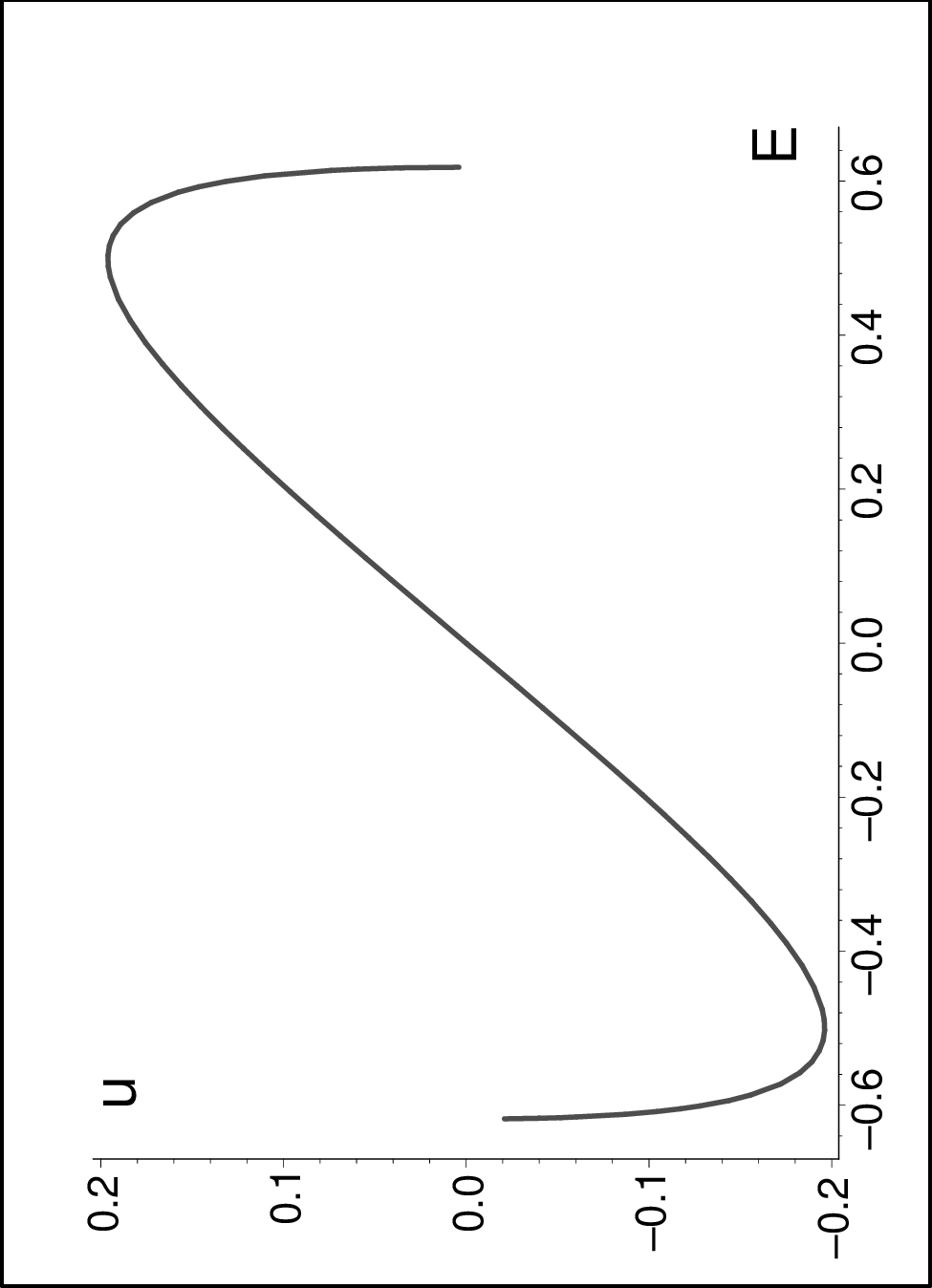,angle=270,width=0.3\textwidth}
\end{center}
\vspace{-2mm}\caption{$u(\textcolor{black}{E})$ for $N=5$.
 \label{5drufi}}
\end{figure}

\begin{figure}[h]
\begin{center}
\epsfig{file=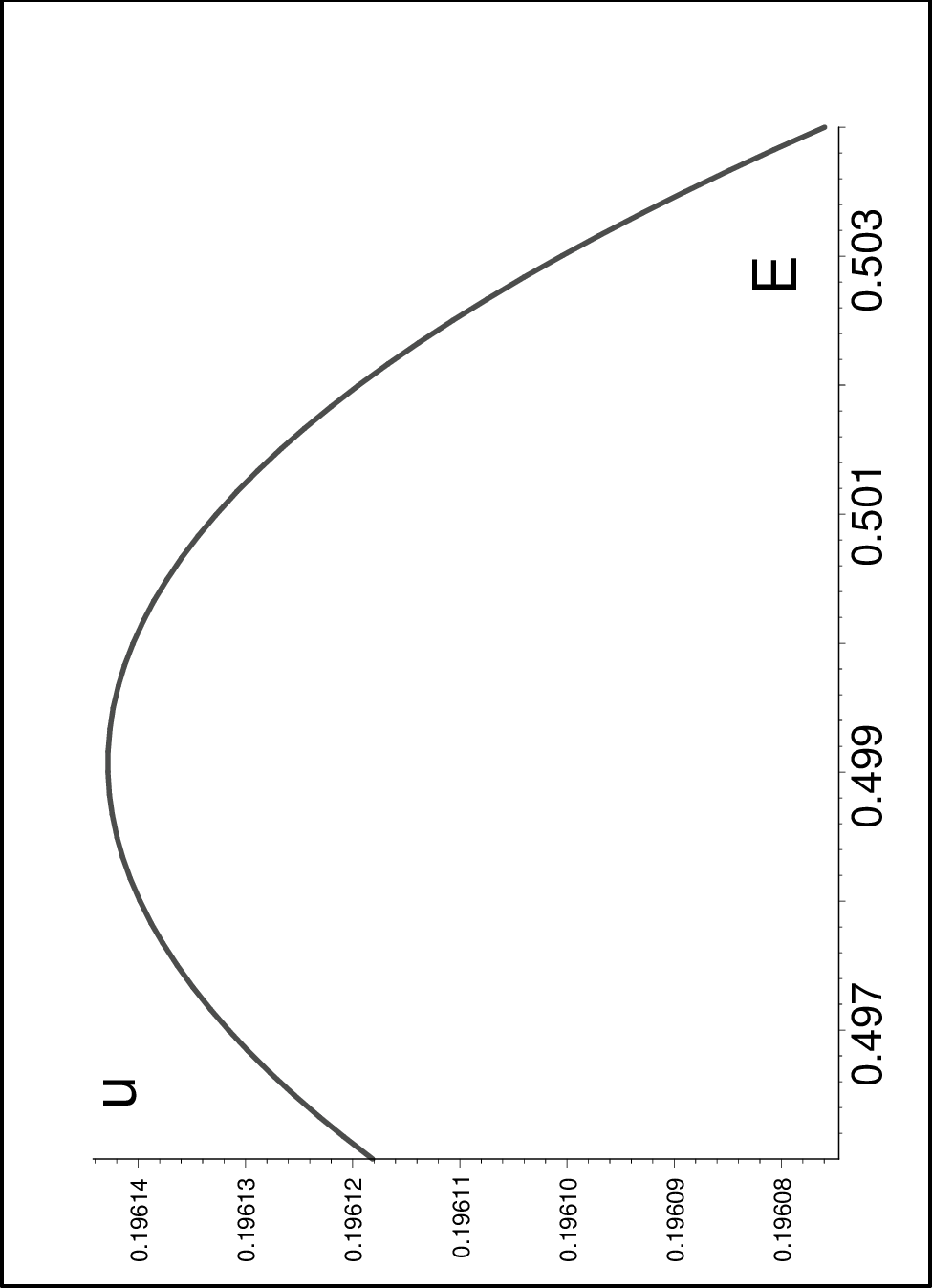,angle=270,width=0.3\textwidth}
\end{center}
\vspace{-2mm}\caption{$u(E)$ for $N=5$ near the right EP - magnified.
 \label{5cdrufi}}
\end{figure}

Outside of this interval (with the endpoints
representing the two EP singularities)
the $r=0$ spectrum becomes composed of the three real
and two complex eigenvalues.
After we return to the analytic approach
we obtain the following formula
for the Sturmian of relevance,
 \be
 u(E)=
 {\frac {1-3\,{{\it {E}}}^{2}+{{\it {E}}}^{4}-\sqrt {1-4\,{{
 \it {E}}}^{2}+4\,{{\it {E}}}^{4}-{{\it {E}}}^{6}}}{{{\it {E}}}^{3}-2\,{
 \it {E}}}}\,
 \ee
Again, the approximate, graphical search for the positions
of the EPs can be based on Figure \ref{5drufi}.
The validity of the approximation published in \cite{init}
is confirmed by
Figure~\ref{5cdrufi} which is just
the
magnified
version of
the relevant part of Figure~\ref{5drufi}
which is, by itself, just a
magnified
version of
the relevant part of Figure~\ref{5bdrufi}.

\section{\textcolor{black}{Beyond $N=5$}}

\textcolor{black}{After one compares, once more,
Figures \ref{3drufi} (where $N=3$ is odd)
and \ref{4drufi} (where $N=4$ is even)
one easily accepts an assumption that
having now, at our disposal, the analytic as well as numerical
characteristics of the $k=3$ model with $N=5$,
one can hardly expect the emergence of any surprise at $N=6$.
Obviously, much more exciting becomes the project of
the study of
the ``next$-k$'' model with $N=7$.}

\textcolor{black}{
\begin{figure}[h]
\begin{center}
\epsfig{file=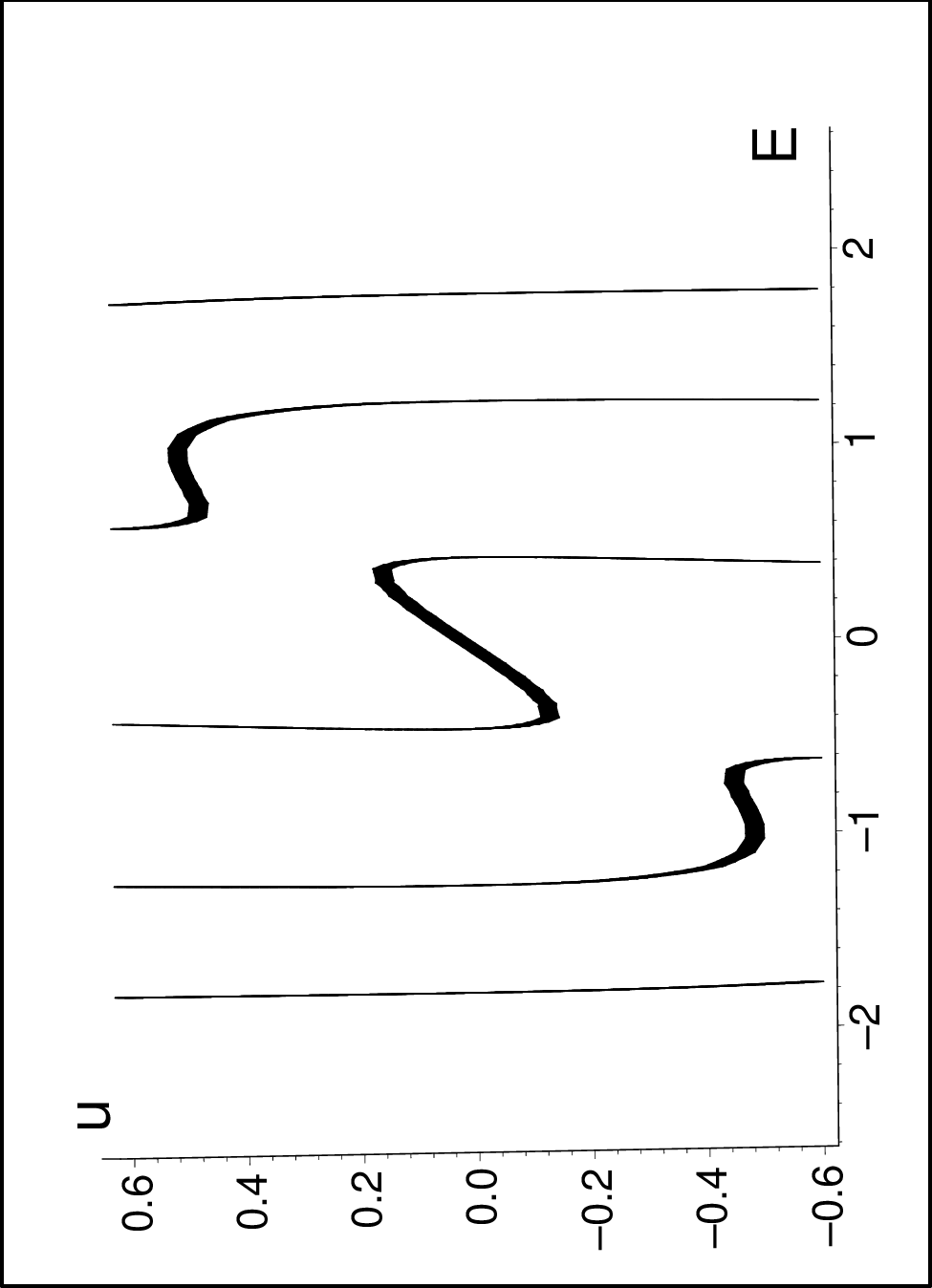,angle=270,width=0.3\textwidth}
\end{center}
\vspace{-2mm}\caption{$u(E)$ for $N=7$.
 \label{7drufi}}
\end{figure}}

\textcolor{black}{In some sense, the result of the $N=7$
calculations is truly surprising. Even though such a result could
have been given here, again, a closed and explicit analytic form
(after all, also this task may be left again to the readers),
a much more concise and persuasive message is being mediated and
provided by Figure \ref{7drufi}
in which we clearly see a decisive qualitative difference from
its $N=5$ predecessor of Figure \ref{5bdrufi}.}

\textcolor{black}{First of all, we notice that the number of the
EP degeneracies grew from two at $N=5$ to six at $N=7$. Secondly,
from a complementary point of view,
the picture clearly demonstrates that
the whole spectrum remains real (i.e., that
the evolution of the underlying quantum system
remains unitary) not only near $u=0$
(when the real part of parameter $z$ or
of function $z(t)$
in the Hamiltonian of Eq.~(\ref{Ka8t})
remains small)
but also inside the two small intervals
where the values of $u \approx \pm 0.46$ are safely non-vanishing.}

\textcolor{black}{
\begin{figure}[h]
\begin{center}
\epsfig{file=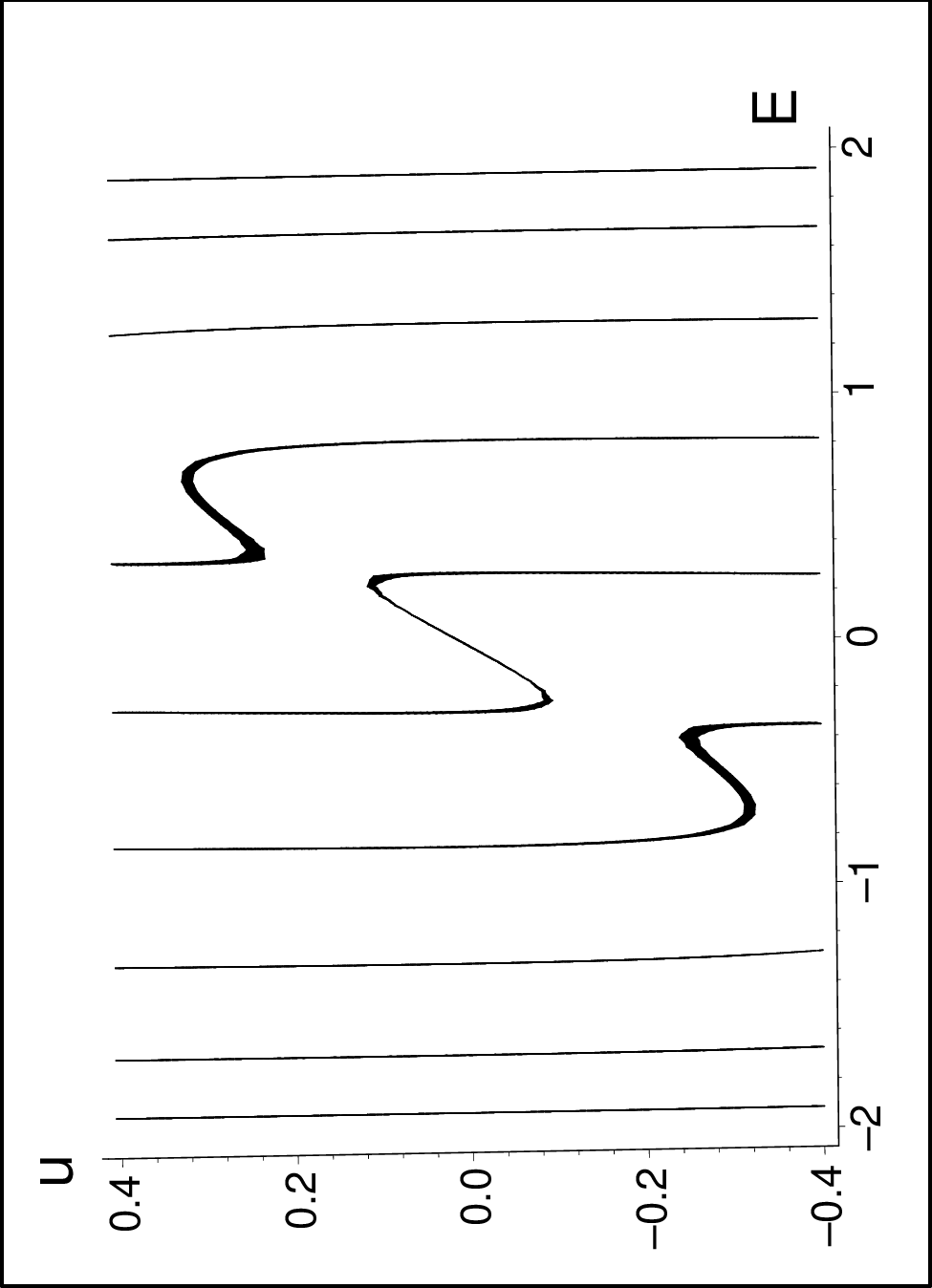,angle=270,width=0.3\textwidth}
\end{center}
\vspace{-2mm}\caption{$u(E)$ for $N=11$.
 \label{11drufi}}
\end{figure}}

\textcolor{black}{What can be also considered
remarkable is that our three
``intervals of unitarity''
are separated by the ``gaps of non-unitarity''
in which the spectrum ceases to be all real.
An apparent paradox is clarified easily because
the phenomenon
just reflects the fact that
the energy levels merging at the
EP boundaries of the separate intervals of $u$
are different.}

\textcolor{black}{In our last comment
on the phenomenon we have to add that virtually
all of the later features of the
$N=7$ model seem to be generic. Indeed,
we draw several
$k>4$ descendants of
Figure \ref{7drufi} (where $k=4$),
and we found that what is only added at $k>4$
are just the decoupled, ``outer observer'' energy levels:
At $k=6$ (i.e., at $N = 11$)
this is illustrated in our last Figure  \ref{11drufi}.}


\section{Discussion}

The basic idea of our present project
of the search for certain specific EP singularities
was twofold.
The first one
was theoretical. Its essence
can be seen in the
admissibility of quantum models
using, formally, non-Hermitian operators.
This, indeed,
extended the scope of the theory while opening
the possibility of control
of
the fate of classical singularities
after quantization.

\textcolor{black}{On the experimental physics side,
various experimental simulations have been performed recently,
ranging from rather elementary
coupled LRC circuits \cite{SSH}
and systems of ultracold atoms \cite{SSHb}
up to the truly sophisticated
coupled optical waveguides \cite{Christodoulides}, etc.
Still, in our present paper, our initial} idea
was purely pragmatic. Reflecting the conventional wisdom that
the essence of many puzzling technical questions
(emerging only during the practical implementations
of abstract considerations)
becomes fully clarified only when one tests the theory on
a sufficiently  simplified \textcolor{black}{schematic} toy model.

In the past, \textcolor{black}{the similar combinations of the
ambitious
theoretical considerations with
the equally ambitious
experiments and} observations
were accompanied by the
scepticism as expressed in our brief note \cite{init}.
We worked there with
several elementary illustrative examples
but we only managed to
describe the properties of the models
using just some brute-force numerical methods.

Our insight in the problem proved only amended when we
managed to
unify the ideas. We realized that
one of the decisive shortcomings of the current quantum theory
(predicting the absence of singularities after quantization)
has to be
seen in the comparatively less developed
techniques of working with non-Hermitian operators.
We imagined that the use
and non-numerical descriptions
of non-Hermitian solvable
models
\textcolor{black}{(cf., e.g., \cite{Nimrode})}
could really open the way towards
a synthesis of the theory
with its
sufficiently transparent interpretations.

In the related literature we noticed that
only too many singularities emerging in
classical physical systems
(with their most prominent sample being the
Einstein's theory of gravity and cosmology)
are widely believed to disappear
and get smeared out
after quantization.
In this sense, our main aim was a search of the models
in which
the solvability is combined
with the existence of the genuine quantum EP
degeneracies.

For a long time, a key obstruction excluding the
models (\ref{usKa8t}) of (\ref{Ka8t})
with a purely imaginary parameter $z$
from our consideration
was the absence of EPs at the odd matrix dimensions $N$.
We found the difference between
models with the respective even and odd $N$ puzzling.
Fortunately, what we had in mind was
just a more or less inessential
difference between the respective presence and absence
of the EP singularity at a central
part of the energy spectrum with $E=0$.
Thus, a broadening of the
perspective was a key to the ultimate decisive progress and success.

A correct insight into the mechanism of the
emergence of the EP singularity
has been achieved via a return to the
numerical tests as presented in our note \cite{init}.
This inspired us to
add a non-vanishing real part to $z$.
Thus,
in our present final resolution of the
puzzles as formulated in \cite{init}
we finally found a unified approach to the model at both the odd and even $N$.
We were able to conclude that irrespectively of the parity of $N$,
the quantum
singularities
supported by the model
have an entirely analogous structure
realized via the
genuine quantum Kato's EP singularities.

With the prominent example of
the quantized Big
Bang singularity
being, presumably, too complicated
for qualitative analysis at present, we restricted
our attention to a much narrower problem of the
emergence and construction of the non-Hermitian EP
degeneracies to
the most elementary boundary-controlled toy model
in which it was possible to
simulate the emergence and unfolding of the EP
singularity by
the purely analytic
non-numerical means.

This enabled us to conclude that
the intuitive perception
of existence of a singularity can be
also given a fully consistent
probabilistic quantum-theoretical background and interpretation.
Naturally, with such a possibility being clarified
on a toy-model level, one has to expect
that in the nearest future, the study of some more realistic models
might open a Pandora's box
of multiple new and difficult mathematical challenges.

Among them, it is already possible to mention the
currently well known
enormous sensitivity of the systems near EPs
to perturbations (cf. \cite{Trefethen,Viola}
or a few remarks in Appendix D)
as well as
all of the related deeper
conceptual, physical and phenomenological
questions as formulated and discussed in the related older as well as newer
literature (cf., e.g.,
\cite{catast,Christodoulides,Berry,Heiss,Heissb}).

\newpage

\newpage

\section*{Appendix A. \textcolor{black}{A note on} non-Hermitian degeneracies}

The traditional studies of conceptual differences
between the classical and quantum physics
found, recently, an unexpected source of a new inspiration
in astrophysics. In particular,
the cosmological hypotheses based on the classical physics
were confronted, recently,
with their quantized descendants in which
the process of quantization has been interpreted as a
reason for a replacement of the
classical point-like singularities
(like, typically, Big Bang,
cf., e.g.,
\cite{Planck,Planckb} or \cite{Penrose})
by their ``smeared''
quantum descendants
(sampled by the so called Big Bounce, cf. \cite{Bounce,piech}).
In such a context, one of
the applicability goals of
our present
study of the possible mechanisms of the
non-Hermitian degeneracies
may be seen in the statement
that
such a regularization need not be necessary.

\textcolor{black}{From the point of view of mathematics,
our argumentation has been based on a rather detailed
study of a fairly schematic toy model. In this sense
we cannot pretend to be able to establish a real contact with
experimentalists. In particular, in the
above-mentioned context of present-day astrophysics, there are only
too many open and difficult questions to be answered
on both the theoretical and/or experimental
level \cite{Rovelli,Thiemann,BBzpet}.
At the same time,
several methodical aspects of these questions are currently finding
some experimentally supported
answers in multiple contexts
ranging, typically, from classical and quantum optics
\cite{Christodoulides,Berry,Mousse,EP3}
and statistical physics \cite{Joshua,NIPb}
up to
the area of
contemporary cosmology \cite{Nimrodc} or
condensed matter \cite{Dyson,Heiss,Dysonb,Rotter,EvaM,EPsa,Liu}
or nuclear physics \cite{Jensen,Ingrid} or
physics of hybrid systems \cite{denish}
or quantum field
theory  \cite{Fisher,BM,KGali,usulumb,Carlsbook}
or physics of nonlinear systems \cite{Nimroda}.}

\subsection*{A.1. Theoretical framework}

{\it A priori,}
the {above-mentioned}
trends towards a delocalization of
Big Bang due to quantization
are far from  surprising.
They are widely accepted
even
in elementary models in which
we only take into consideration
a highly schematic model of the Universe.
For example, we may
follow paper \cite{few}
and decide
to quantize just the
age-dependent  spatial grid points.
Even then,
one intuitively expects that the
sharp grid-point
eigenvalues
get smeared \cite{Rovelli}.

{As we already mentioned above,
a decisive amendment of such a strongly misleading paradigm
only occurred after people realized that
the conventional textbook postulate of Hermiticity
of all of the observables (say, $Q$)
in ${\cal H}_{(physical)}$
is strongly dependent on our tacit assumption that
the latter Hilbert space and, in particular, its inner-product metric
is/are fixed in advance. In this sense,
it was rather revolutionary when Dyson \cite{Dyson}
simply changed the paradigm. What he proposed was
a simplification of the inner product.
This, in effect,
converted his initial conventional choice of the
physical but strongly ``user-unfriendly'' Hilbert space
${\cal H}_{(physical)}$
into a manifestly unphysical but persuasively
calculation-friendlier alternative
${\cal H}_{(mathematical)}$.}

{Not too surprisingly, the latter
amendment of the formalism
(which is currently called
quasi-Hermitian quantum mechanics,
cf., e.g., its oldest review \cite{Geyer})
found innovative applications, first of all,
in the description of complicated
structures of systems in nuclear physics
where any technical simplifications may
have a truly decisive impact (cf., e.g.,
\cite{Jensen}).
At the same time,
the idea
of the inner-product control
did not find an immediate impact,
say,
in the context of quantum field theory.
It
only had to be rediscovered there
after Bender with coauthors
restricted
their attention to a subset of eligible
quantum observables
which were required to exhibit a
technically helpful auxiliary property
called, by these authors,
parity times time reversal symmetry {\it alias\,}
${\cal PT}-$symmetry
(cf. review \cite{Carl} for details). }

An enormous success of the introduction
of the concept of
${\cal PT}-$symmetry
in several branches of physics \cite{Christodoulides}
attracted also the attention of mathematicians.
More or less immediately
they revealed that such a concept
is in fact just a special case of the
Hermiticity
of the relevant operators
in Krein space (cf., e.g., \cite{Dieudonne,BG,Langer}).
In some sense, unfortunately,
these developments led to
a certain destabilization of the terminology,
especially when Mostafazadeh
decided to unify the conventions and
proposed
to give the theory another name
of pseudo-Hermitian quantum mechanics \cite{ali}.

\subsection*{A.2. Phenomenology behind  non-Hermitian degeneracies}

The non-Hermitian degeneracies
played, initially, just a
purely formal role
in perturbation theory:
From the point of view of
an abstract mathematical analysis,
such a form of ``exceptional point'' (EP) singularity has been studied in the Kato's
comprehensive monograph \cite{Kato}.

Later on, the role of the mathematical objects
found its ubiquitous role in several  branches of physics \cite{St,return,[2]}
including even the traditional theory or resonant (i.e., unstable) states
\cite{Feshbach,Nimrod}.

It is, perhaps, worth adding that the
special, strictly pairwise
complex mergers,
say, of certain energy eigenvalues,
 \be
 \lim_{t\to t^{(EP)}}(E_{n_1}(t)-E_{n_2}(t))=0\,.
\label{pairwb}
 \ee
can be also found
in the quantum theory
of anharmonic oscillators \cite{Wu2,Wu}
\textcolor{black}{(with a decisive
methodical relevance in quantum field theory \cite{pert})}.

In all of these contexts, a key technicality is that one gets rid of
the conventional Hermiticity
(say, of any suitable non-stationary and
$N$-by-$N$-matrix observable $Q^{(N)}(t)$)
which is weakened
to read
 \be
  {Q}_{}^{}(t)   
 \ \neq \ \left [{Q}_{}^{}(t)\right ]^\dagger \ \
  \ \ {\rm in} \ \ \ \ {\cal H}_{(mathematical)}
\ \neq \ {\cal H}_{(physical)}
  \,
  \label{txtc}
  \ee
(here
we dropped the superscript
$^{(N)}$ as redundant).
One only has to add a complementary quasi-Hermiticity  \cite{Geyer,Dieudonne}
requirement
 \be
 {Q}^\dagger(t)\,\Theta(t)=\Theta(t)\,{Q}(t)\,.
\label{quhe}
 \ee
Again, operator $\Theta(t)$ stands here for a correct physical
inner-product metric \cite{Geyer,Brody,Faria,timedep,SIGMA,Fring,Ju}
which is, naturally, ambiguous \cite{SI}.

\section*{Appendix B. Closed versus open systems}

In the introductory part of our paper \cite{init}
we had to point out
that
all of the quantum models which we took into account were
not only non-Hermitian (in the sense of being assigned
some non-Hermitian operators
representing some relevant observable quantities)
but also, at the same time, hiddenly Hermitian
{\it alias\,} quasi-Hermitian, with
the origin of this terminological ambiguity dating back to the
comprehensive 1992 review paper \cite{Geyer}
by Scholtz, Geyer and Hahne.

The scope of our present
continuation of presentation \cite{init} is broader,
requiring a more detailed
terminologically-oriented explanations:
For the sake of brevity let us consider only the
subcategory of the quantum
systems
possessing just bound states.

\subsection*{B.1. Closed systems and their unitary evolution}

In the context of
the so called quasi-Hermitian quantum mechanics
of review \cite{Geyer}
(cf. also its more recent
and more detailed presentation and explanation in \cite{ali}),
the quantum system under consideration is considered ``closed'',
i.e., stable and unitary in an appropriate
physical Hilbert space of states ${\cal H}_{(physical)}$.

The first comment to be added is that
besides the obligatory
requirement of the reality of the spectrum
as imposed upon every relevant operator
representing an observable,
the description of the
closed quantum system might
still remain ambiguous and incomplete without
a rather thorough clarification
and disambiguation of terminology.

One of the rather unfortunate related sources of
potential misunderstandings lies in the
widely accepted tacit convention that
within the closed-system quasi-Hermitian framework
we do not perform the necessary calculations
in
${\cal H}_{(physical)}$
(i.e., in the standard physical Hilbert space
of conventional textbooks)
but rather in its auxiliary, decisively
user-friendlier alternative
${\cal H}_{(mathematical)}$.

The latter space
is, admissibly, manifestly unphysical.
One of the most unpleasant consequences of
this purely technical shortcoming is that the
relationship between
${\cal H}_{(physica)}$ and
${\cal H}_{(mathematical)}$
is not always properly kept in mind:
%
Still, the clarification
of the puzzle is rather easily achieved
using an appropriate consequent notation (cf. a few
more detailed comments in \cite{init}).
In particular,
a minor nontrivial amendment of the notation conventions
can be recommended in connection
with the
``ketket'' abbreviation $|\psi\kkt :=\Theta\,|\psi\kt$
where the symbol $\Theta$ denotes the so called
physical inner product metric operator
(see
 \cite{SIGMA,jupi}).

%
%
%
%
%
%


%

\subsection*{B.2. Unstable, open quantum systems}

In the preceding paragraph we admitted just
the quantum systems in which the evolution remains unitary.
This is to be guaranteed by the
existence of an appropriate metric operator $\Theta$.
Still, the scope
of the theory can be broadened to admit
the absence of unitarity as encountered
in many models of unstable systems
called open quantum systems
\textcolor{black}{emerging, for example, in
nuclear physics \cite{Ingrid} or in
condensed-matter physics \cite{Rotter}.}

These systems are, typically,
characterized by the
influence of an ``environment'' leading to the
emergence of  certain unstable states called
resonances~\cite{Nimrod}.
There is no doubt that the emergence of
resonances is characteristic for many realistic
branches of quantum physics
including, typically, the description of the many-body
nuclear, atomic or molecular systems.
In opposite direction, a return to
unitarity
can be then perceived
as a mere recovery of stability,
the admissibility of which keeps the theory
compact and more universal.

One of the technical difficulties is only
encountered
on the purely mathematical level
because the loss of
the reality of the
eigenvalues
would make both their
(numerical) search and (experiment-related) interpretation
perceivably more difficult.
Indeed, whenever one would like to
communicate with experimentalists and, say, predict
the results of measurements, one should have to determine
the (this time, complex) eigenvalues as precisely as possible,
offering a really model-independent
way towards the related physics.

%
%
%

After all, it is well known that
in open systems
the
complexity of the eigenvalues is a
consequence of
the existence of some more or less unknown
environment.
This means that non-unitary models
can still be considered realistic.


\section*{Appendix C. Numerical constructions}

\subsection*{\textcolor{black}{C.1. Complex boundary conditions in square well}}

\textcolor{black}{In conventional
textbooks \cite{Messiah}
the abstract mathematical principles of quantum theory
are often
illustrated using the simplest possible
square-well Schr\"{o}dinger equation
 \be
 -\frac{d^2}{dx^2}\psi_n(x)=\varepsilon_n\psi_n(x)
  \,,\ \ \
   \psi_n(-L)= \psi_n(L)=0\,,
   \ \ \ \ n=0,1,\ldots
 \label{spojhambc}
 \ee
or, alternatively, its numerically motivated \cite{Acton} difference-equation
approximate form
 \be
 -\psi_n(x_{k-1})+2\,\psi_n(x_{k})-\psi_n(x_{k+1})=
 E_n^{(N)}\psi_n(x_{k})\,
 \label{diskrhambc}
 \ee
where  $ k = 1,
2,\ldots,N$ and $\psi_n(x_{0})=\psi_n(x_{N+1})=0\,$.}

\textcolor{black}{In quasi-Hermitian quantum mechanics one can either
use the stationary non-Hermitian version of
Schr\"{o}dinger picture \cite{Geyer} or its non-stationary
interaction picture generalization \cite{book,timedep,Fring,NIP}.
In both of these scenarios, one of the
most natural points of making the dynamics nontrivial
are the boundaries of the interval. At these points it is sufficient
to use the
Robin boundary conditions
 \be
  \psi(-L)= \frac{\rm i }{\alpha + {\rm i}\beta}
  \,\frac{d}{dx}\psi(-L)\,,\ \ \ \ \ \
  \psi(L)= \frac{\rm i }{\alpha - {\rm i}\beta}
  \,\frac{d}{dx}\psi(L)\,
   \label{bcdavid}
 \ee
(in Eq.~(\ref{spojhambc})) or
 \be
  \psi_n(x_{0})= \frac{\rm i }{\alpha + {\rm i}\beta}\,
  \left (
  \frac{\psi_n(x_{1})-\psi_n(x_{0})}{h}
  \right )\,,
  \ \ \ \ \ \
  \psi_n(x_{N+1})= \frac{\rm i }{\alpha - {\rm i}\beta}\,
  \left (
  \frac{\psi_n(x_{N+1})-\psi_n(x_{N})}{h}
  \right )\,
 \label{prvnice}
  \ee
(in Eq.~(\ref{diskrhambc})).}

\textcolor{black}{The main advantage of this constraint is
that it
contains
two
parameters $\alpha \,,\,
\beta\, \in \,\mathbb{R}$
which violate the
Hermiticity of the Hamiltonian
while still preserving the
reality of the bound-state-energy spectrum \cite{init}.
This makes the model (equivalent to the one with
matrix Hamiltonians (\ref{usKa8t}) or (\ref{Ka8t}))
suitable for various methodical purposes.}

\subsection*{\textcolor{black}{C.2. Vicinity of singularities}}

The task of a constructive study of the
properties of quantum systems near their
exceptional-point dynamical extremes
becomes particularly challenging
when the authors of such a study
try to combine the requirements of mathematical rigor
with the ambition of making some experimentally verifiable predictions.

In our present paper we separated these two
requirements.
For the purposes of mathematical insight we
used just the most elementary
operators of observables.
Still, even in our schematic, boundary-controlled square-well
pseudo-Hermitian models, the computer-assisted numerical
calculations appeared challenging (cf.
\cite{init}, with several further relevant references therein)
as well as useful:
They helped us to reveal
the
slightly counterintuitive nature of the non-Hermitian quantum
theory
in both of its stationary and non-stationary realizations.

In particular, we found that the latter
formal shortcoming of the theory
can be perceivably weakened during its
various specific
toy-model implementations.
In all of these implementations, what is shared
as a decisive advantage
is the fact that
in contrast to the textbook models with trivial
identity-operator metric $\Theta=I$,
the
non-Hermitian
systems are now allowed to reach their singularities.
In the purely numerical setting, nevertheless,
it is well known that
when we want to study the
properties of
systems near their
EP singularities,
the influence of the rounding errors rapidly increases
with the decrease of the distance of
the parameter from its EP value
(see
Table Nr.~1 in Ref.~\cite{EPnum}).

This observation was the very essence of the message as delivered
in \cite{init}.
For definiteness,
we restricted our attention there
to the two separate domains of applicability
of the idea of a consistent coexistence
of a singularity on both the classical and quantum-theory level.
In both cases
we paid attention just to the
quantum system, the states of which were defined
in a finite, $N-$dimensional
Hilbert space ${\cal H}^{(N)}_{physical}$.


In a mathematically oriented and
less phenomenologically ambitious
part of the message of paper \cite{init}
the observable characteristics of the
quantum system in question were assumed
represented
by a time-dependent
and very specific $N$ by $N$ matrix (\ref{usKa8t})
representing a toy model with boundary-controlled dynamics.

\section*{Appendix D. Quantum physics near the singularities}

In the conventional textbooks on quantum mechanics
it is usually pointed out that
the singularities emerging
in various classical
physical systems
get very often smeared out
after quantization.
In this context we believe that such a
``rule of thumb'' need not be universally valid.
The essence of our persuasion is that
there exist non-equivalent approaches to the process of quantization,
in the framework of at least some of
which the singularities
attributed to some classical physical system
(and described, often, by
the so called theory of catastrophes \cite{Zeeman})
can find a very natural singular quantum counterpart \cite{catast}.

\subsection*{D.1. The vicinity of
singularity after quantization}

In this Appendix our attention will be paid
to the circumstances of
the
emergence of the singularities in a genuine quantum
dynamical regime.
We have to emphasize that in their admissibility one can see
one of the main phenomenological advantages
of the models using non-Hermitian
operators of observables.
The point is that
in the models using conventional Hermitian operators,
the eigenvalues
exhibit a tendency towards repulsion.
The characteristic consequence is the well known
avoided-level-crossing phenomenon \cite{init}.

%
%

In several papers including
also our most recent concise conference contribution
\cite{init}
we claimed that the intuitive and widely accepted
implication
``observability $\Longrightarrow$ avoided crossing''
need not hold.
We felt inspired by several
quantum-gravity interpretations of Big Bang in cosmology,
by which the
classical initial Big Bang singularity
becomes regularized and converted,
after quantization, into a Big Bounce
(cf. also a broader comprehensive review
of literature in
dedicated  monograph~\cite{Rovelli}).

After the recent quasi-Hermitian
reformulation of quantum theory,
it became clear that the survival of the
singularities after quantization cannot be excluded.
The main reason is that the eigenvalues of a
quasi-Hermitian operator
have a counterintuitive tendency of mutual attraction.
In fact,
this makes the possibility of an
unavoided crossing, in quantum as well as classical physics,
ubiquitous \cite{Heiss}.
In
classical optics, for example,
the phenomenon is frequently observed and known
under an indicative
nickname of ``non-Hermitian degeneracy'' \cite{Berry}.

In
quantum theory, the
instant of the non-Hermitian-degeneracy
singularity is widely interpreted
as the Kato's ``exceptional point'' (EP, \cite{Kato}).
In practical model-building processes, unfortunately,
even the very proof of the existence
of the
exceptional-point singularity
is never too easy.
The support of EP is a
feature of the models which is extremely sensitive to
perturbations
(see \cite{Viola}).
In the language of mathematics, also this
observation contributed, significantly,
to the formulation of our present research project.

\subsection*{D.2. Singularities in non-stationary dynamical regime}

In the context of study of models (\ref{usKa8t})
the points we addressed in \cite{PS1}
were partly methodical
and partly model-specific.
On the methodical side
we cited the relevant literature and, in particular,
we recalled and used
our original generalized
formulation of a consistent non-stationary
generalization of
quasi-Hermitian quantum mechanics \cite{NIP}.

We
emphasized that
a key to the transition from stationary to non-stationary
formalism lies in the factorization
of the metric
into factors called Dyson maps \cite{Dyson},
 \be
 \Theta(t)=\Omega^\dagger(t)\,\Omega(t)\,.
 \label{fakof}
 \ee
In the case of our present manifestly
non-Hermitian and non-stationary
model (\ref{usKa8t}) with complex
and time-dependent $z=z(t)$,
an explicit realization of
factorization (\ref{fakof})
was also one of the main highlights in \cite{PS1}.
In comparison with the
stationary results of paper \cite{[17]}
the news were nontrivial.
The simplicity
of our quasi-Hermitian observable of Eq.~(\ref{usKa8t})
enabled us to list and review
all of the subtle consequences
of the combination of the non-Hermiticity with non-stationarity.

With the purely imaginary function $z(t)$
we were even able to illustrate the consequences
of the non-stationarity,
in an explicit algebraic manner,
in the first nontrivial special case with $N=2$.
These results were non-numerical,
involving
not only the
constructions of the non-stationary
matrices $\Theta(t)$ and $\Omega(t)$ but also
the decomposition
of our preselected ``observable Hamiltonian'' $H(t)$ of Eq.~(\ref{usKa8t})
into a superposition  $H(t)=G(t)+\Sigma(t)$
containing the  ``Schr\"{o}dinger Hamiltonian''
component $G(t)$ (i.e.,
the wave-function-evolution generator)
together with the  ``Heisenberg Hamiltonian''
component
{\it alias\,}
``quantum Coriolis force'' 
$\Sigma(t)$,
i.e., the
operator
which formally controls the
evolution
of any relevant observable of the system
via Heisenberg equation.

\subsection*{Funding:}

The research received no funding.

\subsection*{Data availability statement:}

No new data were created or analysed in this study.

\subsection*{Conflicts of Interest:}

The author declares no conflict of interests.

\subsection*{ORCID iD:}

Miloslav Znojil: https://orcid.org/0000-0001-6076-0093


\begin{thebibliography}{00}



\bibitem{Zeeman}
Arnold,  V. I. \emph{Catastrophe Theory};
 Springer-Verlag: Berlin, Germany, 1992.

\bibitem{Messiah}
Messiah, A. \emph{Quantum Mechanics};
North Holland: Amsterdam, The Netherlands, 1961.

\bibitem{Ashtekar}
Ashtekar, A.; Lewandowski, J.
 Background independent
quantum gravity: a status report.
\emph{Class. Quantum Grav.}
 \textbf{2004},
 {\em 21},
     R53--R152.

\bibitem{Rovelli}
Rovelli, C. \emph{Quantum Gravity}; Cambridge University Press: Cambridge, UK, 2004.

\bibitem{Thiemann}
Thiemann, T. \emph{Introduction to Modern Canonical Quantum General
Relativity}; Cambridge University Press: Cambridge, UK, 2007.

\bibitem{book}
 Bagarello, F.; Gazeau, J.-P.; Szafraniec, F.; Znojil, M., Eds.
\emph{Non-Selfadjoint Operators in~Quantum Physics: Mathematical Aspects};
Wiley: Hoboken, NJ, USA, 2015.
%

\bibitem{Geyer}
Scholtz, F. G.; Geyer, H. B.; Hahne, F. J. W. Quasi-Hermitian Operators
in Quantum Mechanics and~the~Variational Principle. \emph{Ann.
Phys.} \textbf{1992}, \emph{213}, 74 - 101.

\bibitem{Carl}
Bender, C. M. Making Sense of~Non-Hermitian Hamiltonians. \emph{Rep.
Prog. Phys.} \textbf{2007}, \emph{70}, 947-1018.

\bibitem{ali}
Mostafazadeh, A. Pseudo-Hermitian Quantum Mechanics. 
\emph{Int. J. Geom. Meth. Mod. Phys}. \textbf{2010}, \emph{7},
1191-1306.

\bibitem{catast}
Znojil, M.  Quantum catastrophes: a case study.
\emph{J.
Phys. A: Math. Theor.}
 \textbf{2012},
 {\em 45},
      444036.

\bibitem{init}
Znojil, M.
Quantum singularities in a solvable toy model.
\emph{J. Phys: Conf. Ser.}
 \textbf{2024},
 {\em 29121},
    012012.

\bibitem{Kato}
Kato, T. \emph{Perturbation Theory for Linear Operators}; Springer:
Berlin/Heidelberg, Germany,
 1966.



\bibitem{grid}
\textcolor{black}{Stoer, J.; Bulirsch, R. \emph{Introduction to Numerical Analysis};
Springer-Verlag:
New York, 1980.}


\bibitem{Ryshik}
Abramowitz M.; Stegun, I. A. \emph{Handbook of Mathematical Functions};
Dover: New York, 1970.


\bibitem{[17]}
Znojil, M.
Solvable non-Hermitian discrete square well
with closed-form physical inner product.
\emph{J. Phys. A: Math. Theor.}
 \textbf{2014},
 {\em 47},
    435302.

\bibitem{Dyson}
 Dyson, F.J.
General theory of spin-wave interactions.
 \emph{Phys. Rev.} \textbf{1956}, \emph{102}, 1217 - 1230

\bibitem{Dieudonne}
 Dieudonn\'{e}, J.
Quasi-Hermitian operators.
 In~Proceedings of the International Symposium on Linear Spaces,
 {Jerusalem,  Israel,  5--12~July~1961};
Pergamon: Oxford, UK, 1961; pp. 115--122.



\bibitem{Brody}
\textcolor{black}{Brody, D. C.
Biorthogonal quantum mechanics.
\emph{J. Phys. A: Math. Theor.}
{\bf 2013},
{\em 47},
      035305.}


\bibitem{Faria}
\textcolor{black}{Figueira de Morisson Faria, C.; Fring, A.
Time evolution of non-Hermitian Hamiltonian systems.
\emph{J. Phys. A: Math. Gen.}
{\bf 2006},
{\em 39},
      9269--9289.}

\bibitem{timedep}
Znojil, M.
Time-dependent version of cryptohermitian quantum theory.
\emph{ Phys. Rev. D} \textbf{2008}, \emph{78}, 085003.

\bibitem{SIGMA}
Znojil, M. Three-Hilbert-space formulation of Quantum Mechanics.
\emph{Symm. Integ. Geom. Meth. Appl. SIGMA} \textbf{2009}, {\em 5}, 001
%
(arXiv: 0901.0700).


\bibitem{Fring}
Fring, A.; Moussa, M. H. Y. Unitary quantum evolution for time-dependent
                 quasi-{H}ermitian systems with nonobservable
                  Hamiltonians.
\emph{Phys. Rev. A}
 \textbf{2016},
 {\em 93},
     042114.

\bibitem{NIP}
Znojil, M. Non-Hermitian interaction representation
and its use in relativistic quantum mechanics.
\emph{Ann. Phys. (NY)} \textbf{2017}, {\em 385}, 162--179.


\bibitem{Android}
Fring, A. An introduction to PT-symmetric quantum
mechanics-time-dependent systems.
\emph{J. Phys.: Conf. Ser.} {\bf 2023},
{\em 2448},
      1
(arXiv: 2201.05140).

\bibitem{[34]}
Fring, A.; Taira, T.; Tenney, R.  Real energies and Berry phases
in all  PT regimes in time-dependent
non-Hermitian theories.
\emph{J. Phys. A: Math. Theor.}
 \textbf{2023},
 {\em 56},
      12LT01.

\bibitem{Matrasulov}
Rakhmanov,  S.; Trunk, C.; Znojil, M.; Matrasulov, D.
PT-symmetric dynamical confinement:
Fermi acceleration, quantum force, and Berry phase.
\emph{Phys. Rev. A}
 \textbf{2024},
 {\em 109},
     053519.

\bibitem{PS1}
Znojil, M.
Discrete-coordinate crypto-Hermitian
quantum system controlled by time-dependent Robin boundary conditions.
\emph{Phys. Scripta}
 \textbf{2024},
 {\em 99},
      035250.

\bibitem{[34b]}
Znojil, M.
Broken Hermiticity phase transition in Bose-Hubbard model.
\emph{Phys.Rev. A}
 \textbf{2018},
 {\em 98},
      052102.

\bibitem{Sturm}
Znojil, M.;
Geyer, H. B. Sturm-Schroedinger equations:
formula for metric.
\emph{Pramana - J. Phys.}
 \textbf{2009},
 {\em 73},
     299--306.



\bibitem{SSH}
\textcolor{black}{Schindler, J.; Li, A.; Zheng, M.-C.; Ellis, F. M.; Kottos, T.
Experimental study of active LRC circuits with PT symmetries
%
\emph{Phys.Rev. A}
 \textbf{2011},
 {\em 84},
      040101(R).}


\bibitem{SSHb}
\textcolor{black}{Li, J.-M.; Harter, A. K.; Liu, J.; de Melo, L.; Joglekar, Y. N.;
Luo, L.
Observation of parity-time symmetry breaking transitions in
a dissipative Floquet system of ultracold atoms.
\emph{Nature Comm.}
 \textbf{2019},
 {\em 10},
      855.}




\bibitem{Christodoulides}
Christodoulides, D.; Yang, ~J.-K., Eds.
\emph{Parity-time Symmetry and~Its Applications};
%
Springer: Singapore, 2018.



\bibitem{Nimrode}
\textcolor{black}{Bagchi, B.; Ghosh, R.; Sen, S.
Exceptional point in a coupled Swanson system.
 \emph{EPL}
 {\bf 2022},
 {\em 137},
      50004.}


\bibitem{Trefethen}
Trefethen, L. N.; Embree, M.
\emph{Spectra and Pseudospectra: The Behavior of Nonnormal Matrices and Operators};
Princeton University Press: Princeton, 2005.

\bibitem{Viola}
Krej\v{c}i\v{r}\'{\i}k, D.; Siegl, P.; Tater, M.; Viola, J.
Pseudospectra in non-Hermitian quantum mechanics.
\emph{J. Math. Phys. } \textbf{2015}, \emph{56}, 103513.


\bibitem{Berry}
Berry, M. V. Physics of Nonhermitian Degeneracies.
\emph{Czech. J. Phys.} \textbf{2004}, \emph{54}, 1039--1047.


\bibitem{Heiss}
Heiss, W.D. Exceptional points - their universal occurrence and
their physical significance. \emph{Czech. J. Phys.}
\textbf{2004}, \emph{54}, 1091--1100.



\bibitem{Heissb}
\textcolor{black}{Heiss, W.D. The physics of exceptional points.
  \emph{J. Phys. A Math. Theor.} \textbf{2012}, \emph{45}, 444016.}




\bibitem{Planck}
\textcolor{black}{Tanabashi, M.; et al.
Astrophysical Constants and Parameters.
 \emph{Phys. Rev. D}
 {\bf 2019},
 {\em 98},
       030001.}



\bibitem{Planckb}\textcolor{black}{
Aghanim, N.; et al.
Planck 2018 results. VI. Cosmological parameters.
\emph{Astronomy \& Astrophysics}
{\bf 2020},
 {\em 641},
A6.}
%

%


%




\bibitem{Penrose}
\textcolor{black}{Gurzadyan, V. G.; Penrose, R.
On CCC-predicted concentric low-variance
circles in the CMB sky.
 \emph{Eur. Phys. J. Plus. }
 {\bf 2013},
 {\em 128},
    22.}




\bibitem{Bounce}
Bojowald, M.
Absence of a singularity in loop quantum cosmology.
\emph{Phys. Rev. Lett.}
 \textbf{2001},
 {\em 86},
    5227--5230.

    \bibitem{piech}
 Malkiewicz, P.;  Piechocki, W. Turning Big Bang into Big Bounce:
 II. Quantum dynamics. \emph{Class. Quant. Gravity} \textbf{2010}, \emph{27},  225018.



\bibitem{BBzpet}
\textcolor{black}{Wang, Ch.; Stankiewicz, M.
  Quantization of time and the big bang
  via scale-invariant loop gravity.
\emph{Phys. Lett. B}
 {\bf 2020},
 {\em 800},
     135106.}


\bibitem{Mousse}
\textcolor{black}{R\"{u}ter, C. E.; Makris, K. G.; El-Ganainy, R.; Christodoulides, D. N.;
Segev, M.; Kip, D.
Observation of parity-time symmetry in optics.
\emph{Nat. Phys.} \textbf{2010}, \emph{6}, 192 - 195.}


\bibitem{EP3}
\textcolor{black}{Guria, C.;
Zhong, Q.;
Ozdemir, S. K.;
Patil, Y. S. S.;
El-Ganainy, R.;
Harris, J. G. E.
Resolving the topology of encircling multiple exceptional points.
 \emph{Nature Communications}
 {\bf 2024},
 {\em  15},
       1369.}



\bibitem{Joshua}
\textcolor{black}{Feinberg, J.; Riser, B.
Pseudo-Hermitian random-matrix models: General formalism.
\emph{Nucl. Phys.}
{\bf 2022},
{\em B 975},
      115678.}



\bibitem{NIPb}
\textcolor{black}{Moise, A. A. A.;
Cox, G.;
Merkli, M.
Entropy and entanglement in a bipartite quasi-Hermitian system
and its Hermitian counterparts.
 \emph{Phys. Rev. A}
 {\bf 2023},
 {\em 108},
      012223.}




\bibitem{Nimrodc}
\textcolor{black}{Bagchi, B.; Ghosh, R.; Sen, S.
Analogue Hawking Radiation as a Tunneling in a Two-Level
PT-Symmetric
System.
 \emph{Entropy}
 {\bf 2023},
 {\em 25},
      1202.}

\bibitem{Dysonb}
\textcolor{black}{Koukoutsis, E.;
Hizanidis, K.;
Ram, A. K.;
Vahala, G.;
Dyson maps and unitary evolution for Maxwell equations in tensor dielectric media.
 \emph{Phys. Rev. A}
 {\bf 2023},
 {\em 107},
      042215.}


\bibitem{Rotter}
\textcolor{black}{
\"{O}zdemir, \c{S}.; Rotter, S.; Nori, F.;  Yang, L.
 Parity-time symmetry and exceptional points in
 photonics.
{\it Nat. Mater.}  {\bf 2019}, {\em 18}, 783.}


\bibitem{EvaM}
\textcolor{black}{Graefe, E. M.; G\"{u}nther, U.; Korsch, H. J.; Niederle, A. E.
A non-Hermitian PT-symmetric Bose-Hubbard model: eigenvalue rings
from unfolding higher-order
exceptional points.
 \emph{J. Phys. A: Math. Theor.}
 {\bf 2008},
 {\em 41},
    255206.}



\bibitem{EPsa}
\textcolor{black}{Henry, R. A.; Batchelor, M. T. Exceptional points in the
Baxter-Fendley free parafermion model.
%
%
 \emph{Scipost Phys.}
 {\bf 2023},
 {\em  15},
       016.}



\bibitem{Liu}
\textcolor{black}{Liu, Y. X.;
%
Jiang, X. P.;
%
Cao, J. P.;
%
Chen, S.
%
%
Non-Hermitian mobility edges in one-dimensional quasicrystals with
parity-time symmetry.
 \emph{Phys. Rev. B}
 {\bf 2020},
 {\em 101},
      174205.}


\bibitem{Jensen}
\textcolor{black}{Janssen, D.; D\"{o}nau, F.; Frauendorf, S.; Jolos, R. V. Boson
description of collective states. \emph {Nucl. Phys. A}
\textbf{1971}, \emph {172}, 145 - 165.}


\bibitem{Ingrid}
\textcolor{black}{Rotter, I. {A non-Hermitian Hamilton operator and the physics
of open quantum systems}.
 \emph{J. Phys. A: Math. Theor. }
 {\bf 2009},
 {\em  42},
       153001.}



\bibitem{denish}
\textcolor{black}{{Ramirez, R.; Reboiro, M.; Tielas, D.}
Exceptional Points from the Hamiltonian of a hybrid physical system:
Squeezing and anti-Squeezing.
\emph{Eur. Phys. J. D} \textbf{2020}, \emph{74}, 193.}


\bibitem{Fisher}
\textcolor{black}{Fisher, M. E. Yang-Lee edge singularity and
$\varphi^3$ field theory. \emph{Phys. Rev. Lett.} \textbf{1978},
\emph{40},
 1610 -- 1613.}




\bibitem{BM}
\textcolor{black}{Bender, C. M.; Milton, K. A. Nonperturbative
Calculation of Symmetry Breaking in Quantum Field Theory.
\emph{Phys. Rev.} {\bf 1997}, {\em D 55},
       R3255.}
%

\bibitem{KGali}
\textcolor{black}{Jakubsk\'{y}, V; Smejkal, J.  A positive-definite
scalar product for free Proca particle. \emph{Czech. J. Phys.}
\textbf{2006}, \emph{56}, 985.}



  \bibitem{usulumb}
  \textcolor{black}{Dorey, P.; Dunning, C.; Tateo, R.
From PT-symmetric quantum mechanics to conformal field theory.
 \emph{Pramana - J. Phys.} \textbf{2009}, \emph{73}, 217 -- 239.}
%
%


\bibitem{Carlsbook}
%
\textcolor{black}{Bender, C.M., Ed. \emph{PT Symmetry in~Quantum
and~Classical Physics};
World Scientific: Singapore, 2018.}



\bibitem{Nimroda}
\textcolor{black}{Zezyulin,
D. A.;
Kartashov,
Y. V.;
Konotop, V. V.
Metastable two-component solitons near an exceptional
point.
  \emph{Phys. Rev. A}
 {\bf 2021},
 {\em 104},
      023504.}




\bibitem{few}
Znojil, M.
Few-Grid-Point Simulations of Big Bang Singularity in
Quantum Cosmology.
\emph{Symmetry}
 \textbf{2025},
 {\em 17},
     972.


 \bibitem{BG}
\textcolor{black}{Buslaev, V.; Grecchi, V.
{Equivalence of unstable anharmonic
oscillators and double wells.} \emph{J. Phys. A Math. Gen.}
\textbf{1993}, \emph{26}, 5541--5549.}


\bibitem{Langer}
\textcolor{black}{Langer, H.; Tretter, Ch. A Krein space approach to PT
symmetry,
 \emph{Czech. J. Phys.}
 {\bf 2004},
 {\em 54},
  1113--1120.}





\bibitem{St}
Heiss, W. D.; M\"{u}ller, M.;  Rotter, I.
Collectivity, phase transitions, and exceptional points in open quantum systems.
\emph{Phys. Rev. E}
 \textbf{1998},
 {\em 58},
      2894.

\bibitem{return}
Znojil, M. A return to observability near exceptional points
in a schematic PT-symmetric model.
%
\emph{Phys. Lett. B}
 \textbf{2007},
 {\em 647},
     225--230.

\bibitem{[2]}
Znojil, M.   Unitarity corridors to exceptional points.
\emph{Phys. Rev. A}
 \textbf{2019},
 {\em 100},
      032124.

\bibitem{Feshbach}
Feshbach, H. Unified theory of nuclear reactions.
%
%
%
\emph{ Ann. Phys. (NY)}
 \textbf{1958},
 {\em 5},
     357--390.

 \bibitem{Nimrod}
 Moiseyev, N. \emph{Non-Hermitian Quantum Mechanics};
 Cambridge Univ. Press: Cambridge, 2011.

\bibitem{Wu2}
Bender, C. M.; Wu, T. T.
Anharmonic Oscillator.
\emph{Phys. Rev.} \textbf{1969}, {\em 184}, 1231 - 1260.

\bibitem{Wu}
%
%
Alvarez, G. Bender-Wu branch points in the cubic oscillator.
\emph{ J. Phys. A: Math. Gen.}
 \textbf{1995},
 {\em 28},
      4589--4598.


\bibitem{pert}
\textcolor{black}{Bender, C. M.; Dunne, G. V.
Large-order perturbation theory
for a non-Hermitian PT-symmetric Hamiltonian.
 \emph{J. Math. Phys.}
 {\bf 1999},
 {\em  40},
      4616 -- 4621.}

%


\bibitem{Ju}
%
Ju, C.-Y.; Miranowicz, A.; Minganti, F.; Chan,
C.-T.; Chen, G.-Y.; Nori, F.  Einstein's Quantum Elevator:
Hermitization of Non-Hermitian Hamiltonians via a generalized
vielbein Formalism. \emph{Phys. Rev. Research} \textbf{2022},
 {\em 4}, 023070.


\bibitem{SI}
Znojil, M.; Geyer, H. B.
Construction of a unique metric
in quasi-Hermitian quantum mechanics: non-existence of the charge operator
in a 2 x 2 matrix model.
\emph{Phys. Lett. B}
 \textbf{2006},
 {\em 640},
     52--56.



%

%


\bibitem{jupi}
Ohmori, S.; Takahashi, J.
Rigged Hilbert space approach for non-Hermitian systems with positive definite metric.
\emph{ J. Math. Phys.}
 \textbf{2022},
 {\em 63},
     123503.

%



\bibitem{Acton}
Acton, F. S.
\emph{Numerical Methods that Work};
%
Harper and Row: New
York, 1970.




\bibitem{EPnum}
Znojil, M.
Generalized Bose-Hubbard Hamiltonians
exhibiting a complete non-Hermitian degeneracy
%
%
%
%
\emph{ Ann. Phys. (NY)}
 \textbf{2019},
 {\em 405},
     325--339.



%
%
%

%
%



%



%






\end{thebibliography}
\end{document}